\documentclass[lettersize,journal]{IEEEtran}
\usepackage{amsmath,amsfonts}
\usepackage{algorithmic}
\usepackage{algorithm}
\usepackage{array}
\usepackage[caption=false,font=normalsize,labelfont=sf,textfont=sf]{subfig}
\usepackage{textcomp}
\usepackage{stfloats}
\usepackage{url}
\usepackage{verbatim}
\usepackage{graphicx}
\usepackage{cite}

\usepackage{xcolor}
\usepackage{soul}

\usepackage{amsthm}

\newtheorem{definition}{Definition}
\newtheorem{proposition}{Proposition}
\newtheorem{theorem}{Theorem}

\usepackage{siunitx}
\begin{document}

\title{Channel Chart Location Privacy Based on Geo-Indistinguishability}

\author{Atsu Kokuvi Angélo Passah, Rodrigo C. de Lamare, \IEEEmembership{Fellow,~IEEE,} and Arsenia Chorti, \IEEEmembership{Senior Member,~IEEE} \vspace{-1em}
\thanks{Atsu Kokuvi Angélo Passah is with ETIS Laboratory, ENSEA, CY Cergy Paris University, CNRS, France, and with the Pontifical Catholic University of Rio de Janeiro (PUC-Rio), Brazil.}%
\thanks{Rodrigo C. de Lamare is with the Pontifical Catholic University of Rio de Janeiro (PUC-Rio), Brazil, and with the School of Physics, Engineering and Technology, York University, United Kingdom.}%
\thanks{Arsenia Chorti is with ETIS Laboratory, ENSEA, CY Cergy Paris University, CNRS, France, and with Barkhausen Institut gGmbH, Germany.}
}


\maketitle

{
\begin{abstract}
Channel charting enables location-based services (LBSs) without requiring explicit position information by using pseudo-locations from the channel chart. While this property implies inherent privacy advantages, it does not provide formal privacy guarantees. 
In this work, we address location privacy in channel charting referred to as chart location indistinguishability (CLI), which extends geo-indistinguishability (GI) to channel charting representations. In order to achieve CLI, a standard planar Laplace mechanism is investigated and a geometry-aware Mahalanobis norm planar Laplace (MNPL) mechanism is devised. The proposed MNPL mechanism perturbs the channel chart by injecting noise aligned with the local structure of the chart. In the CLI framework with MNPL, privacy is defined in latent channel chart manifolds using locally adaptive covariance derived from chart neighborhoods, while preserving manifold topology under privacy constraints. In addition, differential privacy is considered as a privacy baseline. The proposed approach is evaluated across multiple channel charting schemes. The performance is assessed using utility metrics such as quality loss (QL) and range query error (RQE), as well as geometry-aware metrics including trustworthiness (TW) and continuity (CT). Numerical results demonstrate that the proposed privacy mechanism provides strong privacy guarantees while preserving the channel chart for LBSs tasks.  
\end{abstract}

\begin{IEEEkeywords}
Channel charting, geo-indistinguishability, privacy, chart location indistinguishability.
\end{IEEEkeywords}
}

\section{Introduction}
\IEEEPARstart{T}{he} increasing availability of channel state information (CSI) at base stations, in particular large-scale components, has enabled the development of channel charting (CC). CC is an unsupervised machine learning technique that maps high-dimensional channel features into a low-dimensional latent space while preserving the local geometry of the wireless environment \cite{8444621}. CC exploits the intrinsic structure of wireless channels to recover the relative spatial relationships between user equipments (UEs) without requiring ground-truth location labels. This makes CC particularly attractive for applications such as mobility management, handover optimization, and location-based services (LBSs) in general \cite{10155724, 10476798, 9482504, 9306087}. 

While CC enables location-based operations, sharing the learned chart embeddings with a third-party LBS introduces privacy risks. Although CC provides pseudo-locations and can be seen to provide a certain privacy level, it encodes physical locations of UEs. A curious LBS that has access to even a few anchor points can exploit the spatial consistency of the chart to build a mapping from the chart space to the physical space \cite{10070385}. This issue is related to the problem of location privacy. Typically, privacy is systematized through the framework of differential privacy (DP) \cite{Dwork11787006_1}, which in CC can be achieved by adding calibrated perturbations to the chart coordinates. Furthermore, focusing precisely on location privacy, the concept of geo-indistinguishability (GI) \cite{chatzikokolakis2015geo, Andr_s_2013}, extends the principles of DP to metric spaces such as the Euclidean space. GI provides a distance-dependent privacy guarantee by ensuring that nearby locations produce statistically similar outputs.

\subsection{Previous and Related Work}

The seminal work in \cite{8444621} proposed a CC framework where large-scale fading channel features were extracted from multi-antenna signals and then fed into the input of the charting algorithm. A more efficient alternative has been introduced in \cite{9526758} where phase-insensitive distance measures were introduced to reduce computational complexity while maintaining robustness against small-scale fading. Other advancements include the use of super-resolution angle-delay-power-profiles combined with the earth-mover distance (EMD), which outperform CSI features and linear algebraic distance measures \cite{10052099}.

A major challenge in CC is ensuring that the latent representation preserves the physical proximity relationships of UEs. Recent methods using timestamps for triplet-based learning have been proposed to improve performance \cite{9448128}. To further align the charts with the spatial geometry, split triplet losses and inertial regularizers have been proposed to account for the finite acceleration and velocity of UEs \cite{9562215}. Additionally, representation-constrained autoencoders can incorporate anchor vectors as side information to unwrap the channel chart and recover the global geometry of the propagation environment \cite{huang2019}.

While CC primarily preserves local neighborhood relationships, recent research focuses on achieving global geometric consistency and absolute localization. Multipoint CC enables distributed base stations to collaboratively learn a unified multi-cell radio map. This method exploits the redundancy in multipoint CSI to mitigate the distortion in single-point CC, thereby improving the accuracy for cell-edge UEs \cite{8645281}. The adoption of novel geodesic distances in \cite{10070385} allowed CC to approach a linear correlation with physical distance. This approach enables a global similarity in the chart. 

CC has also been proposed for online scenarios with high rates such as streaming CSI data. New continual learning strategies with dual-memory short-term buffers allow models to update dynamically and improve performance compared to existing replay-based approaches when dealing with non-stationary CSI \cite{11161077}. Moreover, the framework has been extended beyond static mapping to predictive modeling. By using joint-embedding predictive architectures (JEPA) and a suitable conditioning variable (UE velocity) in \cite{10777043}, the method can predict future latent states that capture the dynamics of the wireless system. These advancements have enabled novel applications such as unsupervised trajectory anomaly detection where unusual movement patterns are identified directly in the latent representation space \cite{11201239}. 

Concerning the GI privacy framework, the formal guarantee has been established in \cite{chatzikokolakis2015geo, Andr_s_2013} as an extension of DP tailored for location-based systems in a 2D Euclidean space. Research has expanded GI from 2D Euclidean planes to more complex spatial environments. A 3D-GI was introduced utilizing a three-variate Laplacian mechanism to simultaneously perturb the 3D coordinates \cite{9646489}. 

In the original GI framework, the same perturbation is applied to all locations without taking into account the specific privacy requirements that different locations may have. Recent advancements have addressed the limitations of uniform privacy budgets which may overprotect low sensitivity areas while underprotecting sensitive ones \cite{10815979, 10462511}. Location-discriminative geo-indistinguishability (LGDI) allows for varying privacy budgets based on location's sensitivity level such as distinguishing between a home address and a shopping mall \cite{10815979}. Similarly, semantic adaptive geo-indistinguishability
mechanism (SAGEO) utilizes semantic tags to adaptively adjust location indistinguishability without the need for a third party \cite{10462511}. The privacy sensitivity was modeled as a curved distance inspired by general relativity. 

A major vulnerability in GI-based systems is the linear increase of the privacy cost over multiple queries, known as the composition problem \cite{8125751, 10376293}. For frequent queries, improved GI-based location perturbation mechanisms use cell layouts and predictive functions to reduce the per-query privacy cost toward zero when the user is in both stationary and moving scenarios \cite{8125751}. In specific applications like location-based advertising (LBA) systems, \cite{10376293} proposed the PrivLocAd framework that mitigates long-term observation and multi-platform collusion attacks using longitudinal location obfuscation. 

GI has been integrated into diverse location-based services and sensing tasks. In mobile crowdsensing (MCS), a Q-learning-based approach was proposed to manage the continuously changing and uncertain task and worker states under GI \cite{11083355}. This approach optimizes worker travel distance while maximizing profit for both workers and the server. In social networking applications, GI has been applied to event retrieval by generating dummy locations based on real population data. This ensures queries are made in realistic inhabited areas rather than uninhabited regions \cite{10444179}.

\subsection{Contributions}
{
In this work, we investigate location privacy in CC, referred to as chart location indistinguishability (CLI) and extend geo-indistinguishability (GI) to CC representations. Both the standard planar Laplace (PL) mechanism used in GI and a geometry-aware Mahalanobis norm planar Laplace (MNPL) mechanism are investigated. MNPL incorporates the intrinsic geometry of the CC through the Mahalanobis distance \cite{2018ReprintOM}. 
The main contributions of this paper are summarized as follows.
\begin{itemize}
\item We propose a privacy framework for CC, termed CLI, which extends the concept of GI to CC representations. CLI defines privacy in latent CC manifolds using a locally adaptive covariance, derived from chart neighborhoods, while preserving the manifold topology under privacy constraints.

\item We investigate both the PL and the proposed MNPL mechanisms within the CLI framework. While PL provides an isotropic perturbation mechanism, MNPL incorporates a locally adaptive covariance to perturb the CC by injecting noise aligned with the local geometric structure of the chart. We then demonstrate that the proposed MNPL mechanism satisfies CLI.

\item 
Standard DP based on the Gaussian mechanism is also investigated as a privacy baseline.

\item Privacy and manifold topology preservation are evaluated across multiple CC schemes using utility metrics such as quality loss (QL) and range query error (RQE), as well as geometry-aware metrics including trustworthiness (TW) and continuity (CT). Numerical results demonstrate that the proposed privacy mechanism provides strong privacy guarantees while preserving the CC performance.
\end{itemize}
}

{
The remainder of the paper is organized as follows. The background on CC, DP and geo-distinguishability, is presented in Section \ref{background} followed by Section \ref{sys_mod} where the system model is described. The proposed CLI framework is included in Section \ref{cli_privacy} followed by the performance evaluation metrics in Section \ref{perf_metrics}. The proposed scheme is assessed and numerical results are presented in Section \ref{num_results} and the work is concluded in Section \ref{conclude}.  
}

\section{Background} \label{background}

This section presents the background on CC and geo-indistinguishability required for this work.

\subsection{Channel Charting} \label{cc_}
We first present the dimensionality reduction objective of CC, followed by two CC techniques based on Siamese networks \cite{8919897} and triplet-based learning \cite{9448128}.
\subsubsection{Local geometry preservation} CC is a framework that learns a mapping from channel features to a low-dimensional chart space in an unsupervised manner. The objective is to preserve the physical neighborhood structure between UEs in the obtained chart. Formally, given a set of 
$U$ channel feature vectors $\{\mathbf{f}_i\}_{i=1}^U$ extracted from CSI measurements, CC learns an encoder $\phi_\theta$ 
that maps each feature vector to low-dimensional embedding (i.e., $D = 2$) described by
\begin{equation}
    \mathbf{z}_i = \phi_\theta(\mathbf{f}_i) \in \mathbb{R}^D, 
    \label{eq_z}
\end{equation}
with $i = 1, \ldots, U$. 
The set of all embeddings $\{\mathbf{z}_i\}_{i=1}^U$ forms the CC.
The preservation of physical  distances in the CC is expressed as
\begin{equation}
    \|\mathbf{x}_i - \mathbf{x}_j\|_2 \approx  \,\|\mathbf{z}_i - \mathbf{z}_j\|_2
    \label{eq_cc}
\end{equation}
where $\mathbf{x}_i \in \mathbb{R}^2$ denotes the physical location of UE $i$.

Since ground truth positions are not available during training, CC relies on the spatial consistency of wireless channel features. This requires a good CSI transformation technique \cite{8444621} in order to accurately represent the true locations in the feature space. The proximity preservation between the feature space and the CC is then expressed as
\begin{equation}
    \mathcal{D}_f(\mathbf{f}_i,  \mathbf{f}_j) \approx  \,\|\mathbf{z}_i - \mathbf{z}_j\|_2,
\end{equation}
where $\mathcal{D}_f$ denotes the feature dissimilarity measure.

\subsubsection{Siamese network channel charting}
Channel features are mapped to a low-dimensional embedding space through a parametric Siamese network encoder $\phi_\theta(\cdot)$ (Fig. \ref{fig_siamese})  with 
$
\mathbf{z}_i = \phi_\theta(\mathbf{f}_i) \in \mathbb{R}^D
$.
\begin{figure}[!t]
    \centering
    \includegraphics[width=0.8\linewidth]{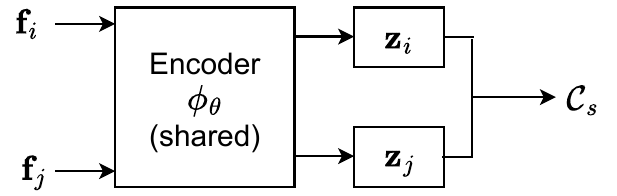}
    \caption{Siamese network using a sampling distance proxy to estimate the physical distance between the UE positions. Both features $\mathbf{f}_i$ and $\mathbf{f}_j$ are encoded with the same encoder.}
    \label{fig_siamese}
\end{figure}
The Siamese network consists of two identical encoder branches with shared parameters, where $\phi_\theta(\cdot)$ is a fully connected feedforward neural network that maps input features to a low-dimensional chart space.
Given a pair of channel feature vectors $(\mathbf{f}_i, \mathbf{f}_j)$, each branch processes one input and produces the corresponding low-dimensional embedding $\mathbf{z}_i$ and $\mathbf{z}_j$.
To enforce consistency between distances in the embedding space and estimated physical distances, we use a proxy for the physical distance between samples. Assuming that the UE moves smoothly and is sampled at a constant sampling rate, the physical distance between two samples can be approximated using their temporal indices as
\begin{equation}
d_s = |i - j| \Delta_d,
\end{equation}
where $\Delta_d$ denotes the average displacement per point.
The network is trained to align the Euclidean distance in the embedding space with this estimated physical distance. Specifically, for a set of sampled pairs $\mathcal{T}_s = \{(i, \, j)\}$, the training goal is to minimize the mean squared error between embedding distances and sampling-based distances. The cost function of this optimization problem is given by 
\begin{equation}
\mathcal{C}_s = \frac{1}{|\mathcal{T}_s|} \sum_{(i,j)\in\mathcal{T}_s} \left( \|\mathbf{z}_i - \mathbf{z}_j\|_2 - d_s \right)^2.
\label{eq_cost}
\end{equation}
This objective encourages the learned embedding to reflect the spatial structure of the environment by preserving relative distances between samples. 

\subsubsection{Triplet-based channel charting}
In triplet-based CC, instead of relying on predefined distances in the original feature space, which may be unreliable in high-dimensional settings, the approach learns a representation in which similar samples are mapped close together while dissimilar samples are mapped farther apart. In CC, similarity can be inferred from temporal proximity, enabling a self-supervised formulation \cite{9448128}.

The training data consist of triplets of CSI feature vectors of the form $(\mathbf{f}_i, \mathbf{f}_j, \mathbf{f}_k)$, where $\mathbf{f}_i$ denotes an anchor sample, $\mathbf{f}_j$ is a positive sample considered similar to the anchor and $\mathbf{f}_k$ is a negative sample considered dissimilar to $\mathbf{f}_i$. $\phi_\theta(\cdot)$ maps each input sample into a low-dimensional embedding space. In particular, each CSI feature vector is transformed into an embedding $\mathbf{z}_i = \phi_\theta(\mathbf{f}_i)$, $\mathbf{z}_j = \phi_\theta(\mathbf{f}_j)$ and $\mathbf{z}_k = \phi_\theta(\mathbf{f}_k)$, where all triplets share the same encoder $\phi_\theta(\cdot)$.

The learning objective is defined through the triplet loss:
\begin{equation}
\mathcal{C}_t = \dfrac{1}{|\mathcal{T}_t|}\sum_{(i,j,k) \in \mathcal{T}_t} \max \left( 0,\ \|\mathbf{z}_i - \mathbf{z}_j\|^2 - \|\mathbf{z}_i - \mathbf{z}_k\|^2 + \alpha \right),
\end{equation}
where $\alpha > 0$ is a margin parameter and $\mathcal{T}_t$ denotes the set of triplets \cite{9448128}.
This loss function encourages the embedding of the anchor sample to be closer to the positive sample than the negative sample by at least the margin $\alpha$, thereby preserving relative similarity relationships.

\subsection{Differential Privacy and Geo-indistinguishability}
{
Before dwelling into how GI extends the DP principle to the metric space, let  us first define DP.

\subsubsection{Differential Privacy} 
DP ensures that the outcome of a data analysis does not meaningfully depend on any single user's data by adding just enough randomness to protect individuals while keeping results useful \cite{Dwork11787006_1}. The formal definition of DP is given below.
\begin{definition}\label{def_dp_}
A randomized mechanism $\mathcal{M}$ is $(\varepsilon, \delta)$-differentially private if for any two datasets $\mathcal{D}_1$ and $\mathcal{D}_2$ that differ in at most one element and any output set $\mathcal{S} \subseteq Range(\mathcal{M})$, 
\begin{equation}
    Pr[\mathcal{M}(\mathcal{D}_1) \in \mathcal{S}] \leq e^{\varepsilon}\,Pr[\mathcal{M}(\mathcal{D}_2) \in \mathcal{S}] + \delta,
\end{equation}
where $\varepsilon$ is the privacy budget and $\delta$ is the failure probability. $\varepsilon$ is expected to be small (large value of $\varepsilon$ imply weaker privacy guarantee).
\end{definition}
The objective is to distort the data according to the privacy budget. $(\varepsilon, \delta)$-differential privacy is achieved by adding a calibrated noise to the output of the query function. A query function is a deterministic function that maps a dataset to an output without randomness. 
A standard way to achieve $(\varepsilon, \delta)$-DP is through the Gaussian mechanism, which perturbs the output of the query function by adding Gaussian noise. The variance $\sigma^2_{dp}$ of the Gaussian noise is given by 
\begin{equation}
    \sigma^2_{dp}
= \frac{2 \Delta^2 \ln(1.25/\delta)}{\varepsilon^2},
\end{equation}
where $\Delta$ is the sensitivity, which represents the maximum possible change over any two datasets that differ by exactly one record (neighboring datasets) \cite{10152847}.
}

\subsubsection{Geo-indistinguishability}
GI is a formal distance-dependent privacy framework designed for location-based systems, where the required privacy level varies with the physical distance between locations. The definition of GI is given in Definition \ref{def_geo}.

\begin{definition} \label{def_geo}
A randomized mechanism 
$\mathcal{M}$ satisfies 
$\varepsilon$-geo-indistinguishability if for all locations $\mathbf{z}_i$, $\mathbf{z}_j$ and all measurable sets $S$,

\begin{equation}
    \Pr[\mathcal{M}(\mathbf{z}_i) \in S] 
    \leq e^{\varepsilon \, d(\mathbf{z}_i, \mathbf{z}_j)} 
    \Pr[\mathcal{M}(\mathbf{z}_j) \in S]
    \label{eq:geo_indist}
\end{equation}
where $\varepsilon > 0$ is the privacy budget and $d(\mathbf{z}_i, \mathbf{z}_j)$ is the distance between the two locations.
\end{definition}

The core idea is that a user located at position $\mathbf{z}_i$ should enjoy $\ell$-privacy within a radius $r$, meaning that by observing the mechanism output 
$\mathcal{M}(\mathbf{z}_i)$, an adversary's ability to distinguish $\mathbf{z}_i$ from any other location $\mathbf{z}_j$ within the radius $r$ should not increase by more than a factor of $e^\ell$ with $\ell = \varepsilon\,r$. This definition ensures that the privacy guarantee depends on the distance, that is, locations that are close to each other require stronger privacy protection. $r$ is the desired protection radius and $\ell$ is the maximum acceptable privacy 
loss within that radius. The planar Laplacian (PL) mechanism was proposed in \cite{Andr_s_2013} to achieve GI.

Next, the system model is presented in Section \ref{sys_mod}, where the general scheme combining CC, the privacy framework and the released private data to the LBS server is described.

\section{System Model} \label{sys_mod}
Let us consider the system model shown in Fig. \ref{model} which represents the process in which CC is performed at the base station (BS) and the resulting pseudo-locations on the CC are perturbed and sent to the location-based service (LBS) server. The LBS server receives the perturbed chart coordinates and performs proximity, clustering, tracking, etc. depending on the context of the location-based application.
The system model includes the following stages:
    \subsubsection{Real locations} A set of users (UEs) with spatial locations reported by the coordinates  \((\mathbf{x}_1,\,\mathbf{x}_2, \, \ldots, \, \mathbf{x}_U) \), where $U$ denotes the number of users.
    
    \subsubsection{ Base station} The BS uses the received signal to estimate the CSI of all users across the receive antennas. Typically, the BS can derive from the CSI the received signal strength (RSS), the angles of arrival, and the propagation delays. The BS then performs feature extraction to retrieve useful large-scale properties from the CSI for CC. Random CSI fluctuations (e.g., due to small-scale fading) may distort neighborhood relationships making feature extraction a critical step for CC. The extracted CSI features \((\mathbf{f}_1,\,\mathbf{f}_2, \, \ldots, \, \mathbf{f}_U) \) should mimic the local spatial relationships between UEs.

    \subsubsection{Feature engineering and charting function}
    CC consists of two main stages: feature engineering from CSI measurements and the learning of a charting function that maps these features into a low-dimensional (i.e., $D = 2$) embedding space.
The overall processing chain is illustrated in Fig. \ref{feat_ext}. Starting from the CSI estimated at the BS, a sequence of transformations is applied to obtain robust feature vectors that accurately represent the spatial locations. 
\begin{figure}[!t]
    \centering
    \includegraphics[width=0.8\linewidth]{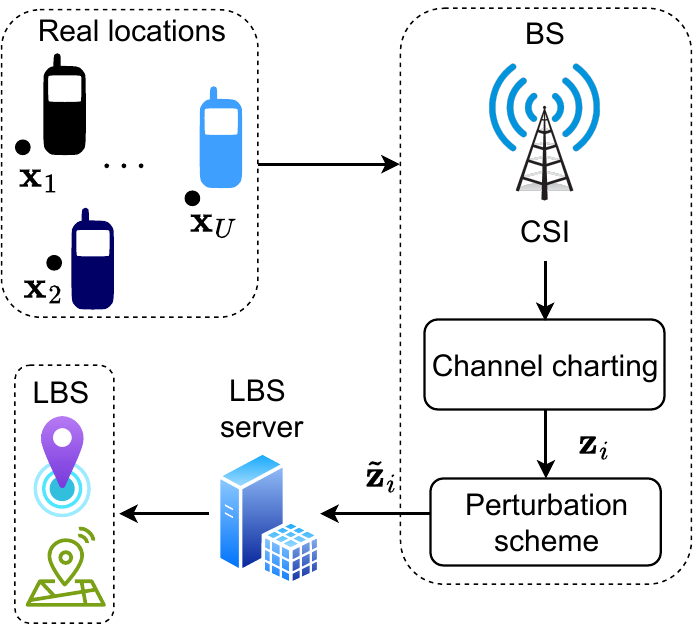}
    \caption{Block diagram of the system model. The base station performs channel charting after estimating the CSI from the real location coordinates. Then the obtained chart coordinates are perturbed and released to the LBS server.}
    \label{model}
\end{figure}

\begin{figure}[!t]
    \centering
    \includegraphics[width=0.95\linewidth]{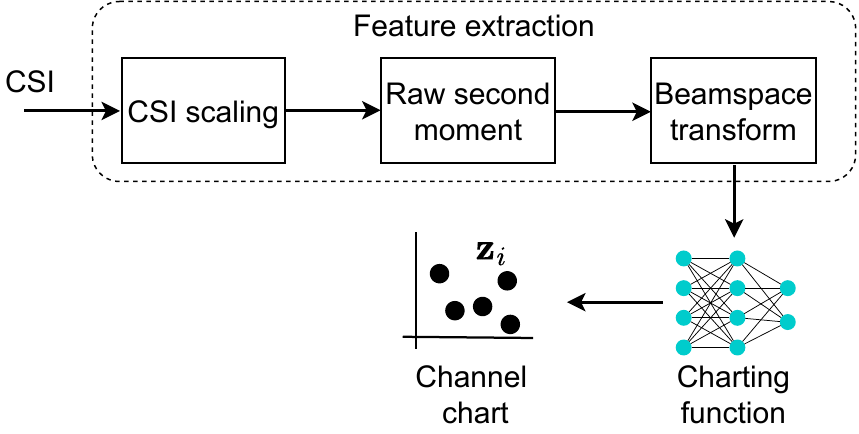}
    \caption{Channel Charting. The CSI estimation is followed by a feature extraction step including the scaling of the CSI, the computation of the second moment and the beamspace transformation. The obtained features are fed into the input of the encoder function that outputs the low-dimensional channel chart.}
    \label{feat_ext}
\end{figure}
Let $\mathbf{h}_i \in \mathbb{C}^{N}$ denote the CSI vector corresponding to the $i$-th UE, where $N$ is the number of antennas. The feature engineering process consists of three steps: feature scaling, second-order moment computation, and beamspace transformation, similarly to the work in \cite{8444621}. In detail, the following steps are typical in CC: 
\begin{itemize} 
\item \textit{CSI scaling}: 
Since CSI in radio geometry is a poor representation of spatial geometry, CSI scaling allows to correct it as shown in \cite{8444621}. The CSI vector is typically scaled as 
\begin{equation}
\bar{\mathbf{h}}_i =
\frac{N^{\beta - 1}}{\|\mathbf{h}_i\|_2^{\beta}} \mathbf{h}_i,
\label{eq_csi_scale}
\end{equation}
where $\beta = 1 + \dfrac{1}{2\gamma}$ and  $\gamma > 0$ is a design parameter related to the path loss.

\item \textit{Raw second moment evaluation}:
Next, a second order statistical moment representation of the scaled CSI is constructed as
\begin{equation}
\mathbf{R}_i = \bar{\mathbf{h}}_i \bar{\mathbf{h}}_i^H \in \mathbb{C}^{N \times N},
\end{equation}
where $(\cdot)^H$ denotes the Hermitian transpose.
The matrix $\mathbf{R}_i$ captures the spatial correlations between antenna elements and encodes the angular and propagation characteristics of the wireless channel.

\item \textit{Beamspace transformation}:
To enhance the spatial structure of the features representation, the correlation matrix is transformed into the beamspace domain using the discrete Fourier transform (DFT). Let $\mathbf{T} \in \mathbb{C}^{N \times N}$ denotes the unitary DFT matrix such that $\mathbf{T}\mathbf{T}^H = \mathbf{I}$. The beamspace representation is given by $\mathbf{T} \mathbf{R}_i \mathbf{T}^H$, and 
\begin{equation}
\mathbf{K}_i = \left| \mathbf{T} \mathbf{R}_i \mathbf{T}^H \right|,
\end{equation}
where $|\cdot|$ denotes the element-wise magnitude.
The matrix $\mathbf{K}_i$ is subsequently vectorized to obtain the feature vector
\begin{equation}
\mathbf{f}_i = \operatorname{vec}(\mathbf{K}_i) \in \mathbb{R}^{N^2}.
\end{equation}
\end{itemize}

The obtained CSI feature vectors $\{\mathbf{f}_i\}_{i=1}^U$ are then fed into the input of the charting function model that learns local neighborhood relationships in an unsupervised manner. The goal of this stage is to output the CC coordinates \((\mathbf{z}_1,\,\mathbf{z}_2, \, \ldots, \, \mathbf{z}_U)\) while preserving the local geometry of neighboring locations in the spatial domain as stated in \ref{cc_}, Eq. \ref{eq_cc}. The obtained chart locations are perturbed and released to the untrusted LBS server. 
    \subsubsection{LBS server and LBS} The untrusted LBS server performs the location-based operations using the perturbed channel chart.

Perturbing the obtained chart coordinates can substantially reduce the quality of service (QoS) at the LBS. Therefore, the objective is to design the perturbation scheme in such a way that a good QoS is obtained while preserving the privacy of the chart locations.
We propose a privacy mechanism that satisfies the properties of geo-indistinguishability to strengthen the privacy protection of the chart coordinates. In the following, we detail the proposed chart location indistinguishability framework. 

\section{Chart Location Indistinguishability} \label{cli_privacy}

In the context of CC, the notion of physical location is replaced by chart coordinates that capture the relative geometry of UE locations. To perturb the CC before releasing it to the LBS server, GI that uses the planar Laplace mechanism can be applied in the chart space. While GI provides a privacy guarantee using Euclidean distance, it does not directly account for the intrinsic structure of CCs, in particular the neighborhood structure of the CC. In this section, we first propose the use of standard GI in the chart space using the planar Laplacian. Then, the proposed chart location indistinguishability based on the Mahalanobis distance \cite{2018ReprintOM} is presented. CLI extends the concept of GI by enforcing a distance-based indistinguishability guarantee between chart points, while allowing the distance metric to reflect the underlying geometry of the chart.

\begin{definition}\label{def_CLI} A randomized mechanism $\mathcal{M}$ satisfies $\varepsilon$-chart location indistinguishability ($\varepsilon$-CLI), if for all chart locations $\mathbf{z}, \mathbf{z}^{\prime} \in \mathbb{R}^2$ and all measurable sets $\mathcal{S}$,
\begin{equation}
\Pr[\mathcal{M}(\mathbf{z}) \in \mathcal{S}]
\leq
e^{\varepsilon \, d_{c}(\mathbf{z}, \mathbf{z}^{\prime})}
\Pr[\mathcal{M}(\mathbf{z}^{\prime}) \in \mathcal{S}],
\end{equation}
where $\varepsilon > 0$ is the privacy budget and $d_{c}(\cdot,\cdot)$ denotes a distance metric defined in the chart space.
\end{definition}

This definition ensures that the probability distributions of the perturbed two nearby chart points are similar, making them difficult to distinguish. The level of indistinguishability is controlled by their distance in the chart. The metric $d_{c}$ can incorporate the geometry of the chart, enabling more flexible and structured privacy guarantees.

\subsection{Planar Laplace Mechanism}
To achieve $\varepsilon$-CLI, one can directly apply the standard planar Laplace mechanism in the chart space. In this case, the distance $d_{c}$ is chosen as the Euclidean distance between chart coordinates. This approach provides a direct extension of geo-indistinguishability to channel charts without accounting for their local geometric structure. In this context, the chart locations are perturbed uniformly in all directions and the perturbation is independent of directions. 

\subsubsection{Planar Laplacian}
The planar Laplacian \cite{Andr_s_2013} is a generalization of the Laplace mechanism \cite{Dwork} to two dimensions. It is the obfuscation mechanism used to enforce geo-indistinguishability in the Euclidean plane. Given a location $\mathbf{z} \in \mathbb{R}^2$, the mechanism releases a perturbed coordinate $\tilde{\mathbf{z}}$ with a two-dimensional Laplace distribution noise. Given the actual chart location $\mathbf{z}$, the probability density function is given by 
\begin{equation}
f_{\varepsilon}(\mathbf{z})(\tilde{\mathbf{z}}) 
=\frac{\varepsilon^2}{2\pi}e^{-\varepsilon\|\tilde{\mathbf{z}}-\mathbf{z}\|_2}, 
\end{equation}
where $\frac{\varepsilon^2}{2\pi}$ is a normalization factor. 

\subsubsection{Drawing a random point of PL noise} For convenience, this probability distribution is then transformed into a system of polar coordinates $(r, \rho)$. In this representation, the angle is uniformly distributed over $[0,\,2\pi)$ while the radius is sampled from a Gamma distribution \cite{Andr_s_2013}. By transforming the density in polar coordinates, the angle and the radius can be drawn independently and then construct the noise components. Let the chart location coordinate be given by $\mathbf{z} = (z_x, z_y)$, then the released perturbed coordinate is given by $\tilde{\mathbf{z}} = (\tilde{z}_x, \tilde{z}_y)$ such that $\tilde{z}_x = z_x + r\cos\rho$ and $\tilde{z}_y = z_y + r\sin\rho$.


\subsection{Mahalanobis Norm Planar Laplace Mechanism}
While PL ensures a valid $\varepsilon$-CLI guarantee in the sense of GI, it may lead to suboptimal QoS performance in CC applications, where the geometry is locally structured. To better exploit the neighborhood structure of CCs, we propose a geometry-aware extension of the planar Laplace mechanism based on the Mahalanobis distance \cite{2018ReprintOM}. In this approach, the distance $d_{c}$ is defined through a positive definite covariance matrix that captures local geometric properties of the chart. This enables different perturbation levels along different directions.

This mechanism provides an $\varepsilon$-CLI guarantee with respect to the Mahalanobis distance by generalizing PL to a geometry-aware setting. By adapting the perturbation to the local geometry of the chart, the Mahalanobis norm planar Laplace (MNPL) mechanism achieves a better trade-off between privacy and utility. Hence, neighborhood relationships are preserved while ensuring strong indistinguishability guarantees.

\subsubsection{Mahalanobis norm} The Mahalanobis distance is a generalized distance metric that accounts for the correlation structure of the dataset. Given a symmetric positive definite (PD) matrix $\mathbf{\Sigma}$, the Mahalanobis distance between two points $\mathbf{z}$ and $\mathbf{z}^{\prime}$ is given by 
\begin{equation}
d_M(\mathbf{z}, \mathbf{z}^{\prime}) = \sqrt{(\mathbf{z}-\mathbf{z}^{\prime})^\top \mathbf{\Sigma}^{-1}(\mathbf{z}-\mathbf{z}^{\prime})}.
\end{equation}
This can also be written as \(d_M(\mathbf{z}, \mathbf{z}^{\prime})=\|\mathbf{\Sigma}^{-1/2}(\mathbf{z}-\mathbf{z}^{\prime})\|_2\), where $\mathbf{\Sigma}$ can be interpreted as a PD matrix that defines the geometry of the space.

\subsubsection{MNPL probability density function (PDF)}
Given the actual chart location $\mathbf{z}$, the probability density function of the MNPL mechanism is given by 
\begin{equation}
p_{\varepsilon}(\mathbf{z})(\tilde{\mathbf{z}}) 
 = \frac{\varepsilon^2}{2\pi\sqrt{\mathrm{det}(\mathbf{\Sigma})}}
e^{-\varepsilon \, d_M(\tilde{\mathbf{z}}, \, \mathbf{z})},
\label{eq_mnpl}
\end{equation}
where $d_M(\tilde{\mathbf{z}}, \mathbf{z}) = \|\mathbf{\Sigma}^{-1/2}(\tilde{\mathbf{z}}-\mathbf{z})\|_2$ corresponds to the distance metric $d_{c}(\cdot,\cdot)$ mentioned in Definition \ref{def_CLI} of $\varepsilon$-CLI. 
Using the change-of-variable rule based on the Jacobian determinant, it can be shown that the integral of $p_{\varepsilon}(\mathbf{z})(\tilde{\mathbf{z}})$ over $\mathbb{R}^2$ is equal to $1$. Thus, $p_{\varepsilon}(\mathbf{z})(\tilde{\mathbf{z}})$ is a probability density function.
This MNPL PDF reduces to the PL PDF when $\mathbf{\Sigma}$ is the identity matrix ($\mathbf{\Sigma} = \mathbf{I}$).  

\subsubsection{Construction of the covariance matrix}
$\mathbf{\Sigma}$ is constructed from the local geometry of the chart locations to ensure that the perturbation is aligned with the geometry of the chart. 
Let $\mathcal{N}_K(\mathbf{f})$ be the set of the $K$ nearest neighbors of the channel feature $\mathbf{f}$. The corresponding chart coordinates of these neighbors are used to compute a local covariance matrix in the chart space. Then, the eigenvalue decomposition of this covariance matrix yields two orthogonal directions associated with the first two largest eigenvalues. Let $\mathbf{u}_p$ and $\mathbf{u}_o$ be orthonormal direction vectors derived from these orthogonal directions, such that $\mathbf{u}_p$ is the principal direction and  $\mathbf{u}_o$ is the orthogonal one (second principal direction). From $\mathbf{u}_p$ and $\mathbf{u}_o$, $\mathbf{\Sigma}$ is obtained as
\begin{equation}
\mathbf{\Sigma} = \sigma_p^2 \mathbf{u}_p \mathbf{u}_p^\top + \sigma_o^2 \mathbf{u}_o \mathbf{u}_o^\top,
\end{equation}
where $\sigma_p^2$ and $\sigma_o^2$ are design parameters which control the amount of perturbation along each direction. 
To ensure that $\mathbf{\Sigma}$ is a positive definite matrix, $\sigma_p^2$ and $\sigma_o^2$ should be strictly positive ($\sigma_p^2$ and $\sigma_o^2 > 0$). 
$\mathbf{\Sigma}$ should be interpreted as a geometry-aware perturbation matrix rather than as the empirical covariance itself. 
This construction enables the mechanism to exploit the local geometry of the chart while controlling the privacy-induced distortion.

Therefore, the perturbation depends on the parameters $\varepsilon$, $\sigma_p^2$ and $\sigma_o^2$. The privacy budget $\varepsilon$ controls the global magnitude of the perturbation noise while $\sigma_p^2$ and $\sigma_o^2$ control how the perturbation is distributed across directions. 
Since the local distribution exhibits higher variance along the principal direction and lower dispersion along the orthogonal direction, the chart locations are more tightly distributed along the orthogonal direction.
Therefore, stronger perturbation should be applied along the direction where chart locations are less dispersed by choosing $\sigma_o > \sigma_p$. 
This design is consistent with the distance-dependent principle of GI where stronger protection is required for nearby locations. 

In practice, $\sigma_p^2$ and $\sigma_o^2$ should remain consistent with the scale of the CC in order to avoid perturbations that dominate the manifold geometry. Assuming the chart coordinates are normalized, it is natural to consider $\sigma_p, \sigma_o \in (0,1]$ in order to keep the perturbation magnitude comparable to the chart scale.
When $\sigma_p$ and $\sigma_o$ are sufficiently small relative to the normalized scale, the perturbation remains limited and the local manifold structure is largely preserved, resulting in limited utility degradation but weaker obfuscation of the chart locations.
In contrast, larger values may produce perturbations whose magnitude dominates the manifold geometry, causing significant distortion of neighborhood relationships and degradation of the chart geometry. 
The particular case $\sigma_p = \sigma_o = 1$ corresponds to $\mathbf{\Sigma} = \mathbf{I}$, where the MNPL mechanism reduces to the PL mechanism.


The choice of the neighborhood size $K$ affects the estimation of the local geometry during the construction of $\mathbf{\Sigma}$. A small value of $K$ allows the covariance matrix to better capture fine local structures of the chart manifold. However, very small neighborhoods may result in unstable covariance estimates. In contrast, larger values of $K$ provide smoother and more robust covariance estimates by averaging over a larger set of neighboring points, at the expense of a less accurate representation of the local geometry. $K$ should remain sufficiently small relative to the dataset size in order to preserve the local manifold structure while ensuring reliable estimation of the principal directions. A practical choice is approximately $5\%$ of the total number of points \cite{8444621, Venna_Jarkko}.

Although the proposed MNPL mechanism is described for $D = 2$ dimensional CCs in this work, the framework naturally generalizes to higher-dimensional latent representations. The local covariance matrix can generally be given by \(\mathbf{\Sigma} = \sum_{i = 1}^{D} \sigma_i^2\mathbf{u}_i\mathbf{u}_i^{\top}\), where $\mathbf{u}_i$ are orthonormal principal directions derived from the local covariance estimation and $\sigma_i^2$ are perturbation parameters controlling the perturbation magnitude along each direction.

\subsubsection{Privacy Guarantee of MNPL}
Theorem \ref{theo_cli} proves that the MNPL mechanism satisfies $\varepsilon$-chart location indistinguishability.
\begin{theorem} \label{theo_cli}
    The Mahalanobis norm planar Laplace mechanism satisfies $\varepsilon$-chart location indistinguishability ($\varepsilon$-CLI).
\end{theorem}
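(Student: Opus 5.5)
The plan is the standard density-ratio argument, applied with the Mahalanobis metric $d_M$ in place of the Euclidean distance. Throughout, $\mathbf{\Sigma}$ is fixed and positive definite, as in the construction above. This makes $d_M(\mathbf{a},\mathbf{b})=\|\mathbf{\Sigma}^{-1/2}(\mathbf{a}-\mathbf{b})\|_2$ a genuine metric on $\mathbb{R}^2$: it is the Euclidean norm composed with the invertible linear map $\mathbf{\Sigma}^{-1/2}$. In particular it satisfies the triangle inequality, and this is the only structural fact the argument needs. The choice $d_c=d_M$ is the one already fixed in the definition of the MNPL density \eqref{eq_mnpl}.

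The first step is a pointwise bound on the densities. Take any chart locations $\mathbf{z},\mathbf{z}^{\prime}$ and any output $\tilde{\mathbf{z}}\in\mathbb{R}^2$. The normalization constant $\varepsilon^2/(2\pi\sqrt{\det(\mathbf{\Sigma})})$ does not depend on the input location, so it cancels in the ratio, giving
\begin{equation*}
\frac{p_{\varepsilon}(\mathbf{z})(\tilde{\mathbf{z}})}{p_{\varepsilon}(\mathbf{z}^{\prime})(\tilde{\mathbf{z}})}
= e^{\varepsilon\,[\,d_M(\tilde{\mathbf{z}},\mathbf{z}^{\prime})-d_M(\tilde{\mathbf{z}},\mathbf{z})\,]}
\le e^{\varepsilon\, d_M(\mathbf{z},\mathbf{z}^{\prime})}.
\end{equation*}
The inequality is the reverse triangle inequality $d_M(\tilde{\mathbf{z}},\mathbf{z}^{\prime})\le d_M(\tilde{\mathbf{z}},\mathbf{z})+d_M(\mathbf{z},\mathbf{z}^{\prime})$. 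To justify it, apply the Euclidean triangle inequality to the vectors $\mathbf{\Sigma}^{-1/2}(\tilde{\mathbf{z}}-\mathbf{z})$ and $\mathbf{\Sigma}^{-1/2}(\mathbf{z}-\mathbf{z}^{\prime})$, whose sum is $\mathbf{\Sigma}^{-1/2}(\tilde{\mathbf{z}}-\mathbf{z}^{\prime})$. I will also record that the density is strictly positive everywhere, so the ratio is well defined.

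The second step passes from densities to sets. For any measurable $\mathcal{S}$, integrate the pointwise inequality $p_{\varepsilon}(\mathbf{z})(\tilde{\mathbf{z}})\le e^{\varepsilon d_M(\mathbf{z},\mathbf{z}^{\prime})}p_{\varepsilon}(\mathbf{z}^{\prime})(\tilde{\mathbf{z}})$ over $\mathcal{S}$. This yields $\Pr[\mathcal{M}(\mathbf{z})\in\mathcal{S}]\le e^{\varepsilon d_M(\mathbf{z},\mathbf{z}^{\prime})}\Pr[\mathcal{M}(\mathbf{z}^{\prime})\in\mathcal{S}]$, which is exactly Definition \ref{def_CLI} with $d_c=d_M$. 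I take the fact that \eqref{eq_mnpl} integrates to one as given, since the paper states it via the Jacobian change of variables. If needed, it follows from the substitution $\mathbf{w}=\mathbf{\Sigma}^{-1/2}(\tilde{\mathbf{z}}-\mathbf{z})$, which reduces the integral to that of the PL density.

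The main obstacle is conceptual rather than computational: $\mathbf{\Sigma}$ is built from the $K$-nearest neighbors of the user's own feature vector. If $\mathbf{\Sigma}$ varied with the input $\mathbf{z}$, the normalization constants would not cancel, and $d_M$ would not be a single metric on the chart. I would therefore state explicitly that the guarantee holds for a fixed $\mathbf{\Sigma}$, treated as public or independent of the location being protected, for example a $\mathbf{\Sigma}$ estimated once per neighborhood region and shared by all points in it. The theorem should be read under this assumption, and with it the proof is complete.
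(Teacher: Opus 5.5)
Your proposal is correct and follows the paper's proof essentially step for step: the common normalization constant cancels, the exponent difference is bounded by $d_M(\mathbf{z},\mathbf{z}^{\prime})$ via the triangle inequality for $d_M$, and the pointwise density bound is integrated over $\mathcal{S}$. Your fixed-$\mathbf{\Sigma}$ caveat is an assumption the paper's proof also makes tacitly, since it uses the same $\mathbf{\Sigma}$ (hence the same normalization constant and the same metric) for both $\mathbf{z}$ and $\mathbf{z}^{\prime}$, so stating it explicitly clarifies the argument rather than departing from it.
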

\begin{proof}
Let $\mathbf{z}$ and $\mathbf{z}^{\prime}$ be two points on the chart. From Eq. \eqref{eq_mnpl}, we have the probability density functions (PDF)
\begin{equation}
p_{\varepsilon}(\mathbf{z})(\tilde{\mathbf{z}})
 = \frac{\varepsilon^2}{2\pi\sqrt{\mathrm{det}(\mathbf{\Sigma})}}
e^{-\varepsilon \, d_M(\tilde{\mathbf{z}}, \, \mathbf{z})}
\end{equation}
and 
\begin{equation}
p_{\varepsilon}(\mathbf{z}^{\prime})(\tilde{\mathbf{z}})
 = \frac{\varepsilon^2}{2\pi\sqrt{\mathrm{det}(\mathbf{\Sigma})}}
e^{-\varepsilon \, d_M(\tilde{\mathbf{z}}, \,\mathbf{z}^{\prime})}.
\end{equation}
Hence, we have
\begin{equation}
\dfrac{p_{\varepsilon}(\mathbf{z})(\tilde{\mathbf{z}})}{p_{\varepsilon}(\mathbf{z}^{\prime})(\tilde{\mathbf{z}})} = e^{
\varepsilon\,\big(
d_M(\tilde{\mathbf{z}},\,\mathbf{z}^{\prime}) - d_M(\tilde{\mathbf{z}},\,\mathbf{z})
\big)}.
\end{equation}
Using the reverse triangle inequality, we have
\(
\big|
d_M(\tilde{\mathbf{z}},\mathbf{z}^{\prime}) - d_M(\tilde{\mathbf{z}}, \mathbf{z})\big|\le
d_M(\mathbf{z},\mathbf{z}^{\prime})\) $\implies$ \(
 d_M(\tilde{\mathbf{z}},\mathbf{z}^{\prime}) - d_M(\tilde{\mathbf{z}}, \mathbf{z}) \le d_M(\mathbf{z},\mathbf{z}^{\prime})\). By substituting this bound into the density ratio, we have 
 \begin{equation}
 \dfrac{p_{\varepsilon}(\mathbf{z})(\tilde{\mathbf{z}})}{p_{\varepsilon}(\mathbf{z}^{\prime})(\tilde{\mathbf{z}})} \le e^{\varepsilon \,d_M(\mathbf{z},\,\mathbf{z}^{\prime})},
 \end{equation}
 that is, 
 \begin{equation}
 p_{\varepsilon}(\mathbf{z})(\tilde{\mathbf{z}}) \le  e^{\varepsilon \,d_M(\mathbf{z},\,\mathbf{z}^{\prime})} p_{\varepsilon}(\mathbf{z}^{\prime})(\tilde{\mathbf{z}}) 
 \end{equation}
By integrating over any measurable set $\mathcal{S}$, we obtain
 \begin{equation}
 \int_{\mathcal{S}}p_{\varepsilon}(\mathbf{z})(\tilde{\mathbf{z}})ds \le  e^{\varepsilon \,d_M(\mathbf{z},\,\mathbf{z}^{\prime})} \int_{\mathcal{S}}p_{\varepsilon}(\mathbf{z}^{\prime})(\tilde{\mathbf{z}}) ds
 \end{equation}
and taking into account the definition of the mechanism probabilities, we have 
$\Pr[\mathcal{M}(\mathbf{z}) \in \mathcal{S})  = \int_{\mathcal{S}}p_{\varepsilon}(\mathbf{z})(\tilde{\mathbf{z}})ds$ and $\Pr[\mathcal{M}(\mathbf{z^{\prime}}) \in \mathcal{S}) = \int_{\mathcal{S}}p_{\varepsilon}(\mathbf{z}^{\prime})(\tilde{\mathbf{z}})ds$.
Therefore, we obtain
\begin{equation}
\Pr[\mathcal{M}(\mathbf{z}) \in \mathcal{S}]
\leq e^{\varepsilon \, d_{M}(\mathbf{z},\, \mathbf{z}^{\prime})}
\Pr[\mathcal{M}(\mathbf{z}^{\prime}) \in \mathcal{S}].
\end{equation}
\end{proof}

\subsubsection{Drawing a random point of MNPL noise} Sampling from the MNPL distribution can be efficiently performed using a transformation based on the standard PL noise. Specifically, a noise vector is first sampled from the PL distribution and a linear transformation is then applied as in Proposition \ref{propo_v}.
\begin{proposition} \label{propo_v}
    The noise vector $\mathbf{v}$ from the MNPL mechanism can be sampled as follows:
    \begin{itemize}
        \item Sample a noise $\mathbf{w}$ from the PL mechanism.
        \item Then set  $\mathbf{v} = \mathbf{\Sigma}^{1/2}\mathbf{w}$.
    \end{itemize}
\end{proposition}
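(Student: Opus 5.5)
The plan is to prove Proposition \ref{propo_v} with the change-of-variables formula for densities under an invertible linear map. Let $\mathbf{w}$ have the PL density centered at the origin,
\begin{equation*}
f_{\varepsilon}(\mathbf{0})(\mathbf{w}) = \frac{\varepsilon^2}{2\pi}e^{-\varepsilon\|\mathbf{w}\|_2}.
\end{equation*}
I would first note that $\mathbf{\Sigma}=\sigma_p^2\mathbf{u}_p\mathbf{u}_p^\top+\sigma_o^2\mathbf{u}_o\mathbf{u}_o^\top$ is symmetric positive definite with $\sigma_p^2,\sigma_o^2>0$. It therefore has a unique symmetric positive definite square root $\mathbf{\Sigma}^{1/2}=\sigma_p\mathbf{u}_p\mathbf{u}_p^\top+\sigma_o\mathbf{u}_o\mathbf{u}_o^\top$, which is invertible with inverse $\mathbf{\Sigma}^{-1/2}$. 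It also satisfies $\det(\mathbf{\Sigma}^{1/2})=\sqrt{\det(\mathbf{\Sigma})}=\sigma_p\sigma_o$. Hence the map $g(\mathbf{w})=\mathbf{\Sigma}^{1/2}\mathbf{w}$ is a linear bijection of $\mathbb{R}^2$ with inverse $g^{-1}(\mathbf{v})=\mathbf{\Sigma}^{-1/2}\mathbf{v}$ and constant Jacobian determinant $\det(\mathbf{\Sigma}^{-1/2})=1/\sqrt{\det(\mathbf{\Sigma})}$.

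Next I would apply the change-of-variables rule. For any measurable $\mathcal{S}$,
\begin{equation*}
\Pr[\mathbf{v}\in\mathcal{S}]=\Pr[\mathbf{w}\in\mathbf{\Sigma}^{-1/2}\mathcal{S}]=\int_{\mathcal{S}} f_{\varepsilon}(\mathbf{0})(\mathbf{\Sigma}^{-1/2}\mathbf{v})\,\big|\det(\mathbf{\Sigma}^{-1/2})\big|\,d\mathbf{v},
\end{equation*}
so $\mathbf{v}$ has density
\begin{equation*}
p_{\mathbf{v}}(\mathbf{v})=\frac{\varepsilon^2}{2\pi\sqrt{\det(\mathbf{\Sigma})}}e^{-\varepsilon\|\mathbf{\Sigma}^{-1/2}\mathbf{v}\|_2}=\frac{\varepsilon^2}{2\pi\sqrt{\det(\mathbf{\Sigma})}}e^{-\varepsilon\, d_M(\mathbf{v},\mathbf{0})}.
\end{equation*}
This is exactly Eq. \eqref{eq_mnpl} with $\mathbf{z}=\mathbf{0}$. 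Translating by the true chart location then shows that $\tilde{\mathbf{z}}=\mathbf{z}+\mathbf{v}$ has density $p_{\varepsilon}(\mathbf{z})(\tilde{\mathbf{z}})$, because $d_M(\tilde{\mathbf{z}},\mathbf{z})$ depends only on $\tilde{\mathbf{z}}-\mathbf{z}$. As a byproduct, this also verifies the normalization claim made after Eq. \eqref{eq_mnpl}.

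The only delicate point is the choice of square root. The argument needs $\mathbf{\Sigma}^{1/2}$ to be the symmetric root, so that $\|\mathbf{\Sigma}^{-1/2}\mathbf{v}\|_2^2=\mathbf{v}^\top\mathbf{\Sigma}^{-1}\mathbf{v}$ matches the Mahalanobis form. More generally, any factor $\mathbf{A}$ with $\mathbf{A}\mathbf{A}^\top=\mathbf{\Sigma}$ works, since $\|\mathbf{A}^{-1}\mathbf{v}\|_2^2=\mathbf{v}^\top\mathbf{\Sigma}^{-1}\mathbf{v}$ and $|\det\mathbf{A}|=\sqrt{\det\mathbf{\Sigma}}$. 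I would state this explicitly.

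For the sampling step itself, I would combine this result with the polar sampling of PL noise described earlier, which gives the practical procedure:
\begin{equation*}
\mathbf{v}=\mathbf{\Sigma}^{1/2}(r\cos\rho,\,r\sin\rho)^\top .
\end{equation*}
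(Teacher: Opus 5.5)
Your proposal is correct and follows the same route as the paper: apply the change-of-variables formula to the linear map $\mathbf{w}\mapsto\mathbf{\Sigma}^{1/2}\mathbf{w}$, picking up the Jacobian factor $|\det(\mathbf{\Sigma}^{-1/2})| = 1/\sqrt{\det(\mathbf{\Sigma})}$ to recover the MNPL density in Eq.~\eqref{eq_mnpl}. Your extra remarks make explicit several points the paper leaves implicit: that $\mathbf{\Sigma}^{1/2}$ should be the symmetric root, or more generally any $\mathbf{A}$ with $\mathbf{A}\mathbf{A}^\top=\mathbf{\Sigma}$; that translating gives $\tilde{\mathbf{z}}=\mathbf{z}+\mathbf{v}$; and that the normalization of the MNPL density follows as a byproduct.
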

\begin{proof}
First, let's set the perturbation as $\tilde{\mathbf{z}} = \mathbf{z} + \mathbf{v}$ where $\mathbf{v}$ is the noise vector from the MNPL distribution. The PDF of $\mathbf{v}$ can be written as 
\begin{equation}
p_V(\mathbf{v}) = \frac{\varepsilon^2}{2\pi\sqrt{\det(\mathbf{\Sigma})}}
e^{-\varepsilon \, d_M(\mathbf{v}, \mathbf{0})},
\label{eq_v}
\end{equation}
where $d_M(\mathbf{v}, \mathbf{0}) = \|\mathbf{\Sigma}^{-1/2} \mathbf{v}\|_2$.
By drawing the noise $\mathbf{w}$ from the PL distribution, we have 
\begin{equation}
p_W(\mathbf{w}) = \frac{\varepsilon^2}{2\pi}
e^{-\varepsilon \|\mathbf{w}\|_2}.
\end{equation}
Also, $\mathbf{v} = \mathbf{\Sigma}^{1/2}\mathbf{w}$ $\implies$ $\mathbf{w} = \mathbf{\Sigma}^{-1/2} \mathbf{v}$. Using the change of variables formula based on the determinant of the Jacobian, we have 
\begin{equation}
p_V(\mathbf{v}) = p_W(\mathbf{\Sigma}^{-1/2} \mathbf{v})|\det(\mathbf{\Sigma}^{-1/2})|. 
\end{equation}
Hence, we have
\begin{equation}
p_V(\mathbf{v}) =
 \dfrac{\varepsilon^2}{2\pi\sqrt{\det(\mathbf{\Sigma})}} e^{-\varepsilon \|\mathbf{\Sigma}^{-1/2} \mathbf{v}\|_2}
\end{equation}
We obtain $p_V(\mathbf{v}) = \frac{\varepsilon^2}{2\pi\sqrt{\det(\mathbf{\Sigma})}}
e^{-\varepsilon \, d_M(\mathbf{v}, \mathbf{0})}$ as in (\ref{eq_v}).
\end{proof}

{
\subsubsection{Computational complexity}
The dominant computational cost of the proposed MNPL mechanism arises from the $K$-nearest neighbor search performed in the feature space. For a CSI feature of dimension $N^2$, an exhaustive $K$-nearest neighbor search requires a computational complexity of $\mathcal{O}(UN^2)$ \cite{Ray_Susmita}, leading to a total complexity of $\mathcal{O}(U^2N^2)$ for all chart locations.
In contrast, the covariance estimation and eigenvalue decomposition steps are lightweight since they are performed in the low-dimensional chart space.
}

\section{Performance Metrics} \label{perf_metrics}

Evaluating privacy-preserving mechanisms requires a careful assessment of the trade-off between privacy protection and data utility. In the context of CC, this trade-off is particularly important, as the goal is not only to protect sensitive location information, but also to preserve the intrinsic geometric structure of the chart. To this end, we consider a set of complementary performance metrics that capture both classical utility aspects and geometry-aware properties of the perturbed chart locations.

In addition, we employ standard metrics commonly used in the GI literature such as quality loss (QL) \cite{Bordenabe_2014} and range query error \cite{Gu2019LDP}. These metrics quantify the distortion introduced by the perturbation mechanism as well as its impact on spatial query accuracy. 

First, we consider geometry-aware metrics, namely trustworthiness (TW) and continuity (CT), which are widely used in manifold learning and dimensionality reduction. These metrics evaluate how well local neighborhood relationships are preserved after perturbation, thereby capturing the extent to which the intrinsic geometry of the chart is maintained. 

\subsection{Trustworthiness and Continuity}
To evaluate how well the noisy chart preserves the local structure of the feature space, we employ TW and CT metrics.
For a fixed neighborhood size $K$, we define $\mathcal{N}_K(\mathbf{f}_i)$ as the set of $K$-nearest neighbors of $\mathbf{f}_i$ in the feature space and $\widetilde{\mathcal{N}}_K(\mathbf{f}_i)$ as the corresponding set in the perturbed chart. Let $k_i(u)$ denote the rank of point $u$ with respect to $\mathbf{f}_i$ in the feature space and $\widetilde{k}_i(u)$ its rank in the perturbed chart.

The trustworthiness metric is defined as:
\begin{equation}
\mathrm{TW} = 1 - \frac{1}{B U} \sum_{i=1}^{U} \sum_{u \in \widetilde{\mathcal{N}}_K(\mathbf{f}_i)} \left(k_i(u) - K\right)^{+}
\label{eq_tw}
\end{equation}
and the continuity metric is defined as:
\begin{equation}
\mathrm{CT} = 1 - \frac{1}{B U} \sum_{i=1}^{U} \sum_{u \in \mathcal{N}_K(\mathbf{f}_i)} \left(\widetilde{k}_i(u) - K\right)^{+},
\label{eq_ct}
\end{equation}
where $(x)^{+} = \max(x, 0)$ and the normalization constant $B$ is given by
\(
B = \dfrac{K(2U - 3K - 1)}{2}.
\)

Trustworthiness penalizes points that appear among the $K$ nearest neighbors in the perturbed chart but are not among the top $K$ neighbors in the feature space. Continuity penalizes points that are close in the feature space but are not preserved as neighboring points in the perturbed chart. Both metrics take values in $[0,1]$, where higher values indicate better preservation of the local neighborhood structure.

\subsection{Range Query Error}
In LBSs, a range query consists of retrieving all users or points of interest located within a given radius from a reference location. The range query error (RQE) evaluates the impact of the perturbation mechanism on spatial query accuracy. 
After perturbation, the reported locations may fall outside their true spatial neighborhoods leading to inaccurate query results. RQE quantifies this degradation by measuring how often perturbed locations fail to remain within a specified distance from their original positions.

Let $\mathbf{z}_i \in \mathbb{R}^2$ denote the true chart location of sample $i$ and $\tilde{\mathbf{z}}_i$ its perturbed version. For a given query radius $r > 0$, the range query centered at $\mathbf{z}_i$ corresponds to the set of points located within a distance $r$ from $\mathbf{z}_i$. RQE is defined as the proportion of perturbed locations that fall outside this query region,
\begin{equation}
\mathrm{RQE} = 1 - \frac{1}{U} \sum_{i=1}^{U} \mathbb{I} \left\{ \|\tilde{\mathbf{z}}_i - \mathbf{z}_i\|_2 \leq r
\right\},
\label{eq_rqe}
\end{equation}
where $\mathbb{I}\{ \cdot\}$ denotes the indicator function, which equals to $1$ if $\|\tilde{\mathbf{z}}_i - \mathbf{z}_i\|_2 \leq r$ and $0$ otherwise.
This metric takes values in $[0,1]$, where lower values indicate better utility. A small RQE means that most perturbed locations remain within the query radius of their true positions, thereby preserving the accuracy of spatial queries. Conversely, a higher value reflects a greater degradation of query results due to the applied perturbation.

\subsection{Quality Loss}
The quality loss (QL) \cite{Bordenabe_2014} measures the degradation of service quality induced by the perturbation mechanism. It quantifies the discrepancy between the true and perturbed chart locations. In more detail, let $\mathcal{Z}$ denote the set of possible chart locations and $\pi(\mathbf{z})$ a prior distribution over $\mathcal{Z}$, such that \(\sum_{\mathbf{z} \in \mathcal{Z}} \pi(\mathbf{z}) = 1\). Let $p(\tilde{\mathbf{z}} \mid \mathbf{z})$ denote the perturbation mechanism that represents the probability of reporting $\tilde{\mathbf{z}}$ when the true chart location is $\mathbf{z}$. Given a Euclidean distance-based quality metric $d(\cdot,\cdot)$ the quality loss is defined as
\begin{equation}
\mathrm{QL}
=
\sum_{\mathbf{z}, \,\tilde{\mathbf{z}}} \pi(\mathbf{z}) \, p(\tilde{\mathbf{z}} \mid \mathbf{z}) \, d(\mathbf{z}, \tilde{\mathbf{z}}).
\label{eq_ql}
\end{equation}
QL defines the expected distance between the real and reported locations.
A lower value of QL indicates better utility, as the perturbed locations remain closer to original chart locations.

\section{Numerical Results} \label{num_results}
{
In this section, we describe the dataset used to evaluate the proposed approaches, the simulation setup and carry out a comparison of the considered channel charting techniques and the impact of privacy mechanisms.
}

\subsection{Simulation Setup}
\subsubsection{Dataset} 
The measurement campaign \cite{Shehzad} was conducted in an urban environment characterized by buildings of approximately $\SI{15}{\meter}$ in height aligned along the streets. The antenna array was installed on the rooftop of one of the buildings. These structures served both as reflectors and blockers for the radio signals. Thus, the propagation conditions includes both line-of-sight (LoS) and non-line-of-sight (NLoS) scenarios.

The transmit array consisted of $64$ elements configured in four rows of $16$ single-polarized patch antennas. The elements were spaced by $\lambda/2$ in the horizontal direction and $\lambda$ vertically, where $\lambda$ is the wavelength. The transmission was performed at a carrier frequency of $\SI{2.18}{\giga\hertz}$ using OFDM waveforms. 

The pilot design enabled channel sounding over $50$ distinct subbands, each containing $12$ consecutive subcarriers within a duration of $\SI{0.5}{\milli\second}$. During this time interval, the channel was assumed to be time-invariant. These pilot bursts were transmitted periodically with a periodicity of $\SI{0.5}{\milli\second}$.

On the receiver side, a setup emulating user equipment (UE) was employed including a single monopole antenna positioned at a height of $\SI{1.5}{\meter}$. Frequency synchronization between the transmitter and receiver was achieved using GPS. During the measurements, the receiver was mounted on a movable cart and guided along multiple predefined routes at an approximate walking speed of $\SI{5}{\kilo\meter\per\hour}$. This resulted in a spatial sampling resolution less than $\SI{0.1}{\milli\meter}$.

\subsubsection{Encoder $\phi_\theta(\cdot)$}
The fully connected feedforward neural network encoder $\phi_\theta(\cdot)$ is given in Fig. \ref{neural_}. This consists of a sequence of dense layers, each followed by batch normalization (BN) and a ReLU activation function except the final layer. The final layer is a linear projection that maps the last hidden representation to the $D$-dimensional embedding space without any activation function, ensuring that the resulting embeddings are unconstrained in $\mathbb{R}^D$. 
\begin{figure}[!t]
    \centering
    \includegraphics[width=0.90\linewidth]{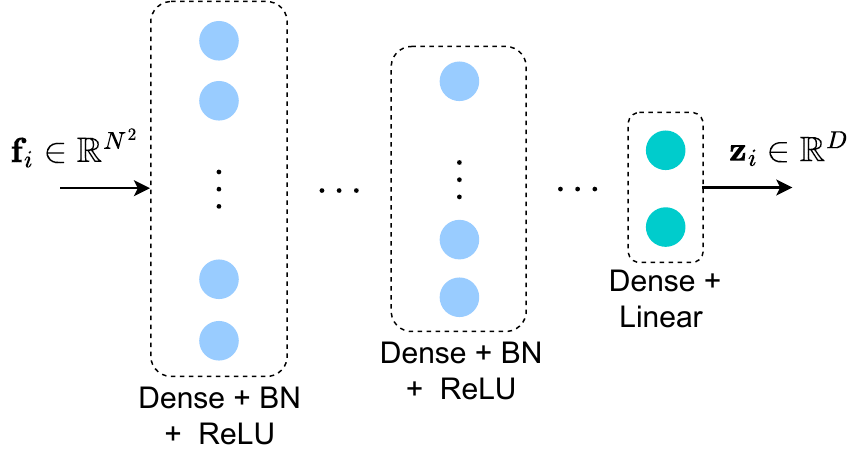}
    \caption{Encoder: fully connected feedforward neural network with each layer is includes a dense layer followed by a batch normalization (BN) and ReLU activation function, except the last embedding layer without BN and activation.}
    \label{neural_}
\end{figure}

The detailed structure of $\phi_\theta(\cdot)$ is given in Table \ref{table_encoder}. The input size $N^2$ is the length of the channel feature vector and the output dimension is the channel chart dimension $D = 2$. A ReLu activation is applied to all the layers except the last layer. 
For the learning, we use a batch size of $256$ and an Adam optimizer with a learning rate $ 10^{-3}$ over $100$ epochs.

\subsubsection{Simulation parameters}
We apply antenna subsampling with a factor of $8$  and the CSI data are averaged over the $50$ subbands. We compare four channel charting methods in the context of privacy schemes, including triplet-based learning (Triplet), autoencoder (AE), PCA and Siamese networks. To evaluate the proposed privacy mechanisms, we use the performance metrics presented in Section \ref{perf_metrics} including continuity given in \eqref{eq_ct} and trustworthiness given in \ref{eq_tw} to assess the charting performance, followed by quality loss given in \eqref{eq_ql} and range query error given in \eqref{eq_rqe}. These metrics are evaluated against different parameters such as the privacy budget $\varepsilon$, the radius $r$, the privacy loss $\ell$ and the number of nearest neighbors $K$.
Unless otherwise specified, the simulation parameters are defined as follows. The number of nearest neighbors is $K = 50$, the privacy budget $\varepsilon = 2$, the protection radius $r = 0.5$, the privacy loss $\ell = 0.5$ and the parameters $\sigma_p = 0.01$ and $\sigma_o = 0.05$.  

\begin{table}[!t]
\centering
\caption{Network structure of the encoder in Fig. \ref{neural_}.}
\renewcommand{\arraystretch}{1.5}
\begin{tabular}{|c|c|c|}
\hline
Layer & Dimensions & Activation \\
\hline
Input & $N^2$  & Linear \\
\hline
$1$ & $256$ & ReLu \\
\hline
$2$ & $ 128$ & ReLu \\
\hline
$3$ & $64$ & ReLu  \\
\hline
$4$ & $32$ &  ReLu \\
\hline
$5$ & $D =2$ &  Linear\\
\hline
\end{tabular}
\label{table_encoder}
\end{table}

\subsection{Channel Charting}
\begin{figure}[!t]
  \centering
  \subfloat[Siamese]{
    \includegraphics[width=0.47\linewidth]{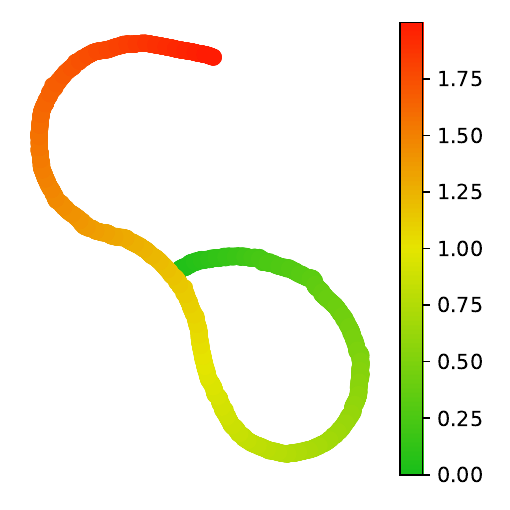}  
  } \hfill 
  \subfloat[Triplet]{
    \includegraphics[width=0.47\linewidth]{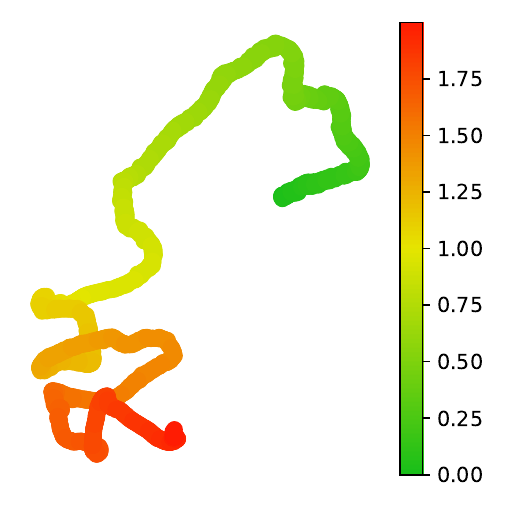}
    } \hfill
  \subfloat[PCA]{
    \includegraphics[width=0.47\linewidth]{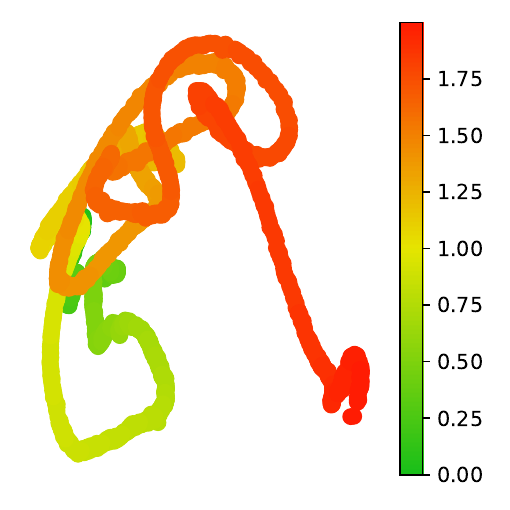}
    } \hfill
  \subfloat[AE]{
    \includegraphics[width=0.47\linewidth]{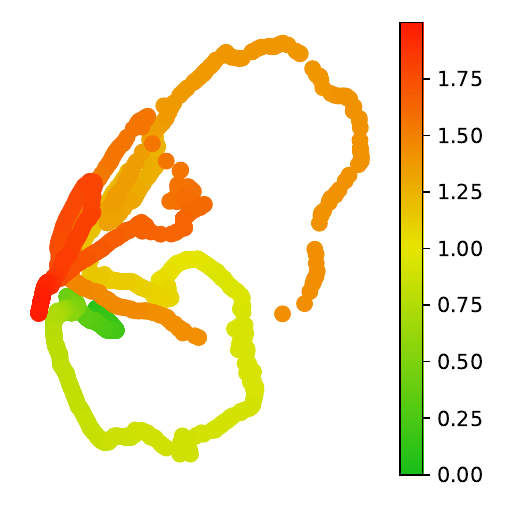}
    }
  \caption{Channel charts without perturbation. The color bar represents the UE positions along the path.}
  \label{c_charts}
\end{figure}
We present in this section the CC techniques without any perturbation. Fig. \ref{c_charts} presents the obtained CCs for PCA, AE, Triplet and Siamese network. The color bar illustrates the correspondence between the obtained chart points and the UEs positions along the path. All the results represent well the neighborhood structure in the CC, in particular, the triplet loss and the Siamese network give the best results visually. 

The neighborhood preservation metrics CT and TW are respectively shown in Tables \ref{tab_ct} and \ref{tab_tw} for different values of $K \in \{1, 50, 100, 150, 200 \}$. As the number of nearest neighbor increases, CT and TW  decrease. All the CC schemes perform very well with CT and TW close to $99\%$, especially for smallest values of $K$. The performance analysis of the privacy methods applied to the CC schemes is conducted in the following subsections.

\begin{table}[!t]
\centering
\caption{CT for different values of the $K$ nearest neighbors for the chart without perturbation.}
\renewcommand{\arraystretch}{1.5} 
\begin{tabular}{|c|c|c|c|c|c|}
\hline
$K$ &  $1$ & $50$& $100$& $150$ & $200$ \\
\hline
Triplet & $0.99$ & $0.99$ & $0.99$ & $0.98$ & $0.98$  \\
\hline
Siamese & $0.99$ & $0.99$ & $0.98$ & $0.96$ & $0.95$  \\
\hline
PCA & $0.99$ & $0.99$ & $0.99$ & $0.98$ & $0.98$  \\
\hline
AE & $0.99$ & $0.99$ & $0.99$ & $0.98$ & $0.98$ \\
\hline
\end{tabular}
\label{tab_ct}
\end{table}

\begin{table}[!t]
\centering
\caption{TW for different values of the $K$ nearest neighbors for the chart without perturbation.}
\renewcommand{\arraystretch}{1.5} 
\begin{tabular}{|c|c|c|c|c|c|}
\hline
$K$ &  $1$ & $50$& $100$& $150$ & $200$ \\
\hline
Triplet & $0.99$ & $0.99$ & $0.98$ & $0.97$ & $0.96$  \\
\hline
Siamese & $0.99$ & $0.99$ & $0.99$ & $0.98$ & $0.96$  \\
\hline
PCA & $0.99$ & $0.96$ & $0.96$ & $0.95$ & $0.95$  \\
\hline
AE & $0.99$ & $0.99$ & $0.99$ & $0.99$ & $0.98$  \\
\hline
\end{tabular}
\label{tab_tw}
\end{table}

\subsection{Differential Privacy}
CT and TW are shown for different values of $K$ and $\varepsilon$ and for all the considered channel charting methods. CT and TW are respectively presented in Tables \ref{tab_ct_dp} and \ref{tab_tw_dp} as a function of $K$ for a privacy budget $\varepsilon = 2$. Both tables show that, after perturbing the chart locations, CT and TW drop to values around $0.5$ for all the schemes. This results from the fact that the randomization of the chart coordinates by adding the Gaussian noise \emph{destroys} the structure of the charts.
\begin{table}[!t]
\centering
\caption{CT for different values of the $K$ nearest neighbors for the perturbed chart with DP}
\renewcommand{\arraystretch}{1.5} 
\begin{tabular}{|c|c|c|c|c|c|}
\hline
$K$ &  $1$ & $50$& $100$& $150$ & $200$ \\
\hline
Triplet & $0.51$ & $0.51$ & $0.51$ & $0.51$ & $0.52$  \\
\hline
Siamese & $ 0.50$ & $ 0.51$ & $ 0.51$ & $ 0.51$ & $0.52$  \\
\hline
PCA & $0.50$ & $0.50$ & $0.51$ & $0.51$ & $0.52$  \\
\hline
AE & $0.50$ & $0.50$ & $0.51$ & $0.51$ & $0.51$  \\
\hline
\end{tabular}
\label{tab_ct_dp}
\end{table}
\begin{table}[!t]
\centering
\caption{TW for different values of the $K$ nearest neighbors for the perturbed chart with DP}
\renewcommand{\arraystretch}{1.5} 
\begin{tabular}{|c|c|c|c|c|c|}
\hline
$K$ &  $1$ & $50$& $100$& $150$ & $200$ \\
\hline
Triplet & $0.51$ & $0.51$ & $0.51$ & $0.51$ & $0.52$  \\
\hline
Siamese & $ 0.50$ & $ 0.51$ & $ 0.51$ & $ 0.51$ & $0.51$  \\
\hline
PCA & $0.50$ & $0.50$ & $0.51$ & $0.51$ & $0.52$  \\
\hline
AE & $0.50$ & $0.50$ & $0.51$ & $0.51$ & $0.51$  \\
\hline
\end{tabular}
\label{tab_tw_dp}
\end{table}

CT and TW are respectively presented in Tables \ref{tab_ct_eps_dp} and \ref{tab_tw_eps_dp} for $K = 50$ and when varying $\varepsilon$ from $0.1$ to $8$. Both metrics increase when increasing $\varepsilon$ but remain close to $0.5$ even for $\varepsilon = 0.8$ where CT $= 0.54$ and TW $ = 0.53$ for the Triplet. These results indicate that, achieving good chart structure preservation after perturbation, we need much higher values of $\varepsilon$ which weaken the privacy guarantee.
\begin{table}[!t]
\centering
\caption{CT for different values of the $\varepsilon$ for the perturbed chart with DP}
\renewcommand{\arraystretch}{1.5} 
\begin{tabular}{|c|c|c|c|c|c|}
\hline
$\varepsilon$ &  $0.1$ & $2$& $4$& $6$ & $8$ \\
\hline
Triplet & $0.50$ & $0.51$ & $0.52$ & $0.53$ & $0.54$  \\
\hline
Siamese & $ 0.50$ & $ 0.51$ & $ 0.51$ & $ 0.52$ & $0.54$  \\
\hline
PCA & $0.50$ & $0.50$ & $0.51$ & $0.51$ & $0.52$  \\
\hline
AE & $0.50$ & $0.50$ & $0.51$ & $0.51$ & $0.51$  \\
\hline
\end{tabular}
\label{tab_ct_eps_dp}
\end{table}
\begin{table}[!t]
\centering
\caption{TW for different values of the $\varepsilon$ for the perturbed chart with DP}
\renewcommand{\arraystretch}{1.5} 
\begin{tabular}{|c|c|c|c|c|c|}
\hline
$\varepsilon$ &  $0.1$ & $2$& $4$& $6$ & $8$ \\
\hline
Triplet & $0.50$ & $0.51$ & $0.51$ & $0.52$ & $0.53$ \\
\hline
Siamese & $0.50$ & $0.51$ & $0.51$ & $ 0.52$ & $0.52$  \\
\hline
PCA & $0.50$ & $0.51$ & $0.51$ & $0.51$ & $0.52$  \\
\hline
AE & $0.50$ & $0.50$ & $0.50$ & $0.51$ & $0.51$  \\
\hline
\end{tabular}
\label{tab_tw_eps_dp}
\end{table}

For example, for the case of $\varepsilon = 2$, we can see in Fig. \ref{cc_perturbed_dp} how the noise destroys the intrinsics structure of the channel chart for both Siamese (Fig. \ref{cc_p_sia_dp}) and triplet (Fig. \ref{cc_p_trip_dp}) cases. 

\begin{figure}[!t]
  \centering
  \subfloat[Siamese]{ \label{cc_p_sia_dp}
    \includegraphics[width=0.47\linewidth]{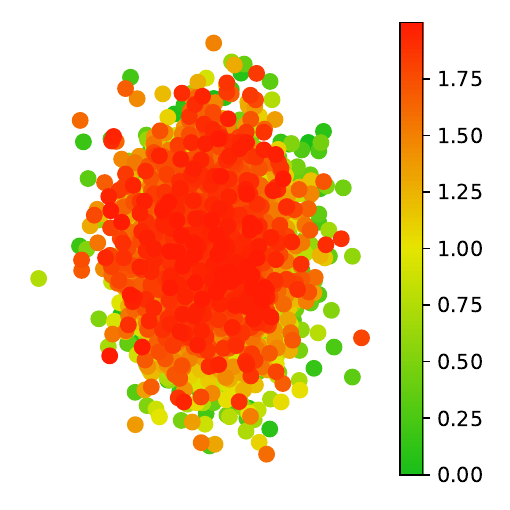}  
  } \hfill 
  \subfloat[Triplet]{\label{cc_p_trip_dp}
    \includegraphics[width=0.47\linewidth]{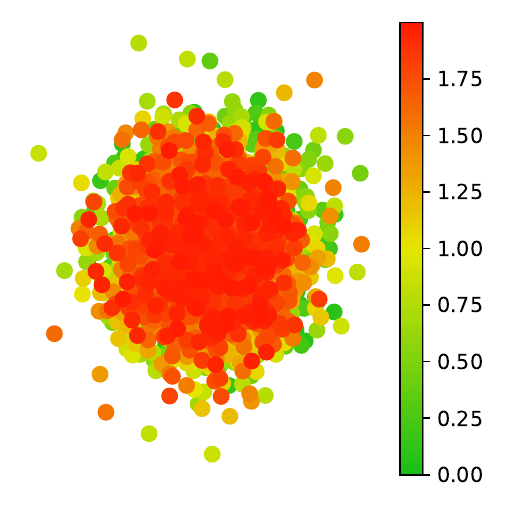}
    }
  \caption{Perturbed channel charts of the Siamese network and the triplet loss when DP is applied. $\varepsilon = 2$.}
  \label{cc_perturbed_dp}
\end{figure}

\subsection{Chart Location Indistinguishability Privacy}
We assess the performance of all the CC schemes when the chart is perturbed using the CLI framework before being released to the LBS server.

We start by showing the perturbed charts for the case of Siamese network in Fig. \ref{cc_perturbed_sia} and the Triplet-based learning in Fig. \ref{cc_perturbed_trip}. In both figures, applying planar Laplace noise based the Euclidean distance as the perturbation mechanism appears to significantly distort the CC and its neighborhood structure as in Fig. \ref{cc_p_sia_pl} for Siamese and Fig. \ref{cc_p_trip_pl} for Triplet. In contrast, the Mahalanobis norm planar Laplace mechanism preserves the neighborhood clusters after perturbation
as in Fig. \ref{cc_p_sia_mnpll} for Siamese and Fig. \ref{cc_p_trip_mnpl} for Triplet. This behavior results from the fact that in the case of MNPL, the perturbation is performed according to the local neighborhood geometry of each chart point. 

\begin{figure}[!h]
  \centering
  \subfloat[PL]{ \label{cc_p_sia_pl}
    \includegraphics[width=0.47\linewidth]{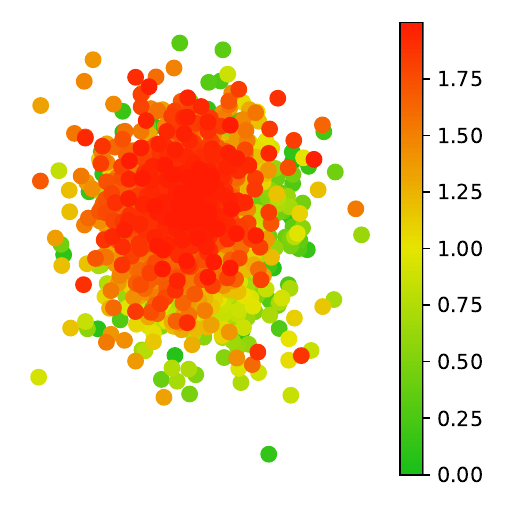}  
  } \hfill 
  \subfloat[MNPL]{\label{cc_p_sia_mnpll}
    \includegraphics[width=0.47\linewidth]{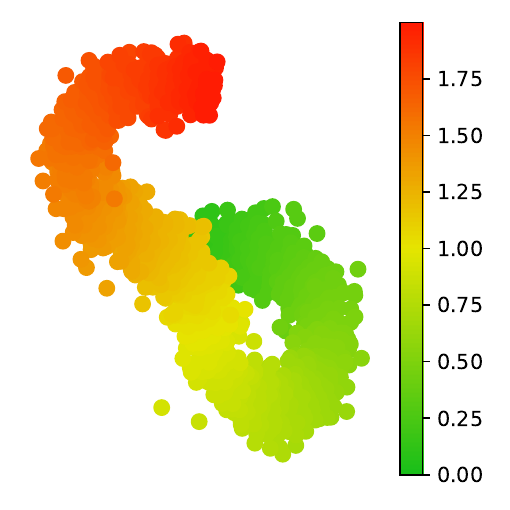}
    }
  \caption{Perturbed channel charts of the proposed Siamese network-based method for both cases of planar Laplace (PL) and Mahalanobis norm planar Laplace mechanisms. $\varepsilon = 2$.}
  \label{cc_perturbed_sia}
\end{figure}

\begin{figure}[!h]
  \centering
  \subfloat[PL]{ \label{cc_p_trip_pl}
    \includegraphics[width=0.47\linewidth]{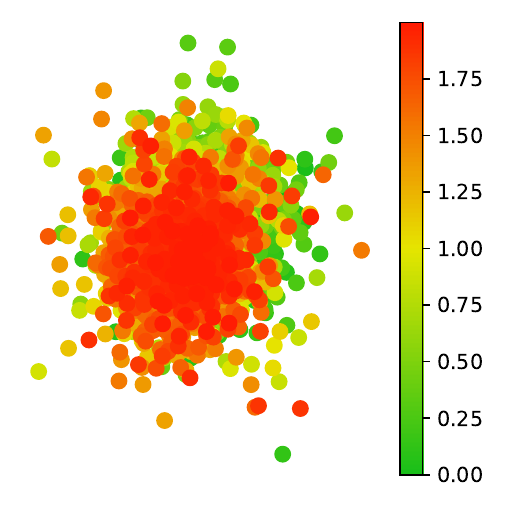}  
  } \hfill 
  \subfloat[MNPL]{ \label{cc_p_trip_mnpl}
    \includegraphics[width=0.47\linewidth]{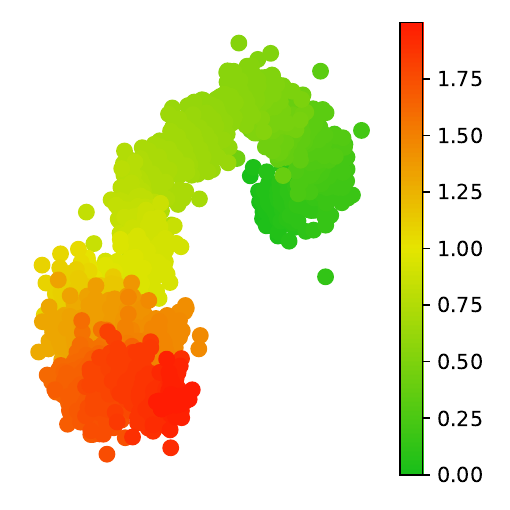}
    }
  \caption{Perturbed channel charts of the Triplet loss scheme for both cases of planar Laplace (PL) and Mahalanobis norm planar Laplace mechanisms. $\varepsilon = 2$.}
  \label{cc_perturbed_trip}
\end{figure}

In Fig. \ref{ct_tw_K}, the continuity and trustworthiness are presented as a function $K$ for a privacy budget of $\varepsilon = 2$. CT and TW slightly decrease when $K$ increases. The most significant difference is observed between the PL and MNPL perturbation mechanisms. For all the channel charting schemes, the PL perturbation gives CT and TW slightly above $0.5$, while the MNPL perturbation still give CT and TW values around $90\%$. The best results are achieved by the Siamese and Triplet methods which achieve values of $97\%$ and $94\%$, respectively, for $K = 50$. These results are consistent with the obtained channel charts in Fig. \ref{cc_perturbed_sia} and \ref{cc_perturbed_trip}. 

\begin{figure}[!t]
  \centering
  \subfloat[Continuity]{
    \includegraphics[width=0.75\linewidth]{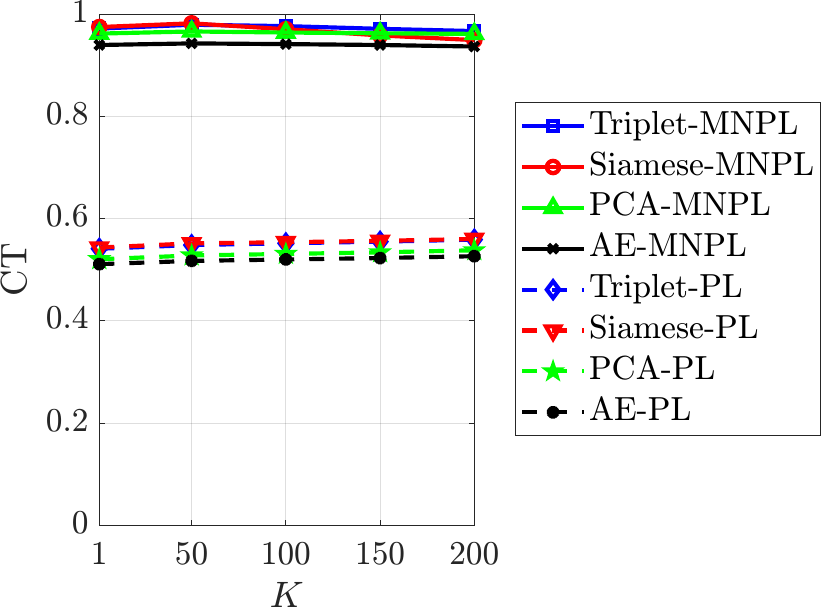}  
  } \hfill
  \subfloat[Trustworthiness]{
    \includegraphics[width=0.75\linewidth]{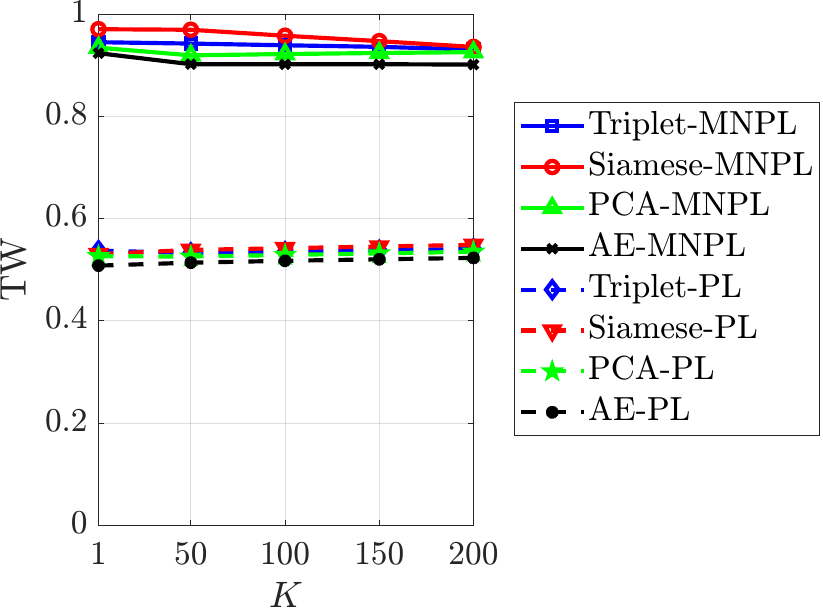}
    }
  \caption{TW and CT vs $K$ for the perturbed chart with a privacy budget of $\varepsilon = 2$.}
  \label{ct_tw_K}
\end{figure}

The assessment results of the charting performance as a function of the privacy budget $\varepsilon$ are included in Fig. \ref{ct_tw_eps} for $K = 50$. Both the continuity and trustworthiness metrics evolve with $\varepsilon$. Increasing $\varepsilon$ decreases the noise magnitude during perturbation and thereby resulting in less perturbation of the chart locations. Note that in practice the value of $\varepsilon$ is set according to the required sensitivity and privacy level of the given system.
\begin{figure}[!t]
  \centering
  \subfloat[CT]{
    \includegraphics[width=0.75\linewidth]{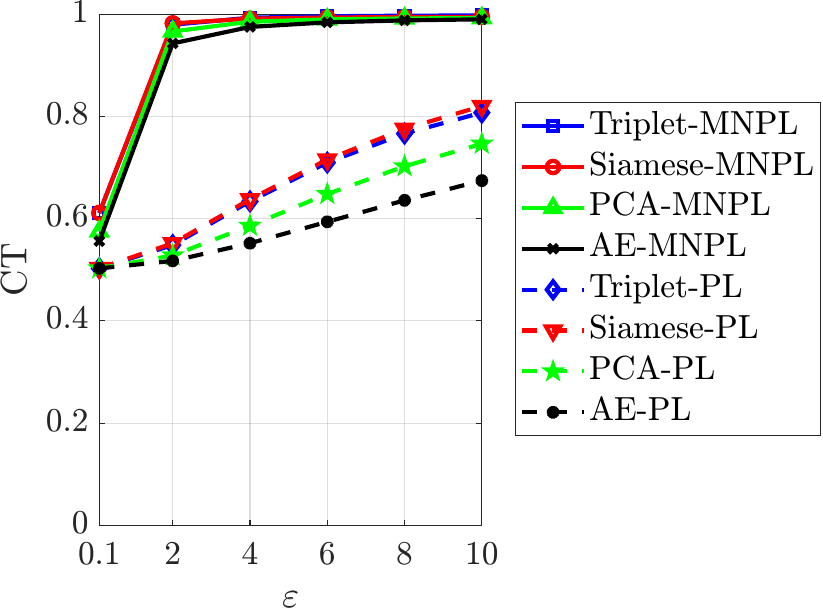} 
  } \hfill
  \subfloat[TW]{
    \includegraphics[width=0.75\linewidth]{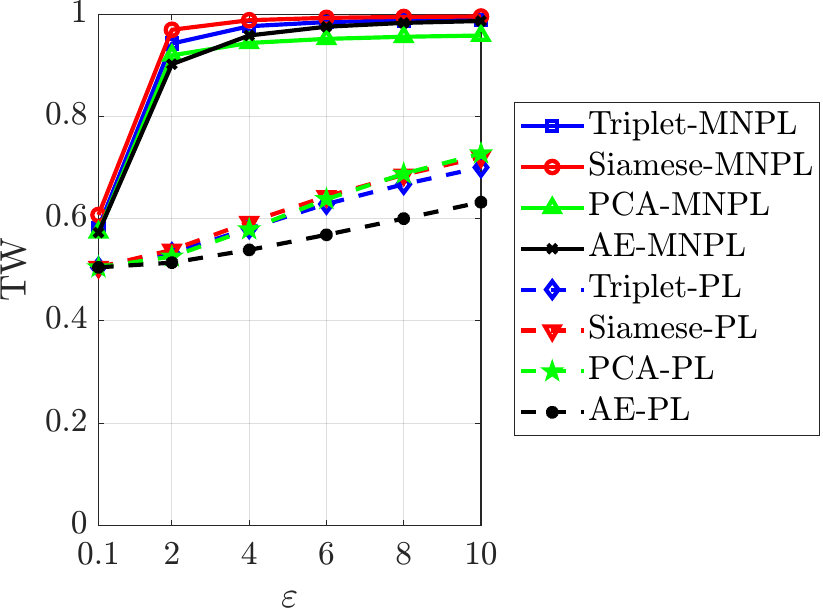}
    }
  \caption{TW and CT vs the privacy budget $\varepsilon$.}
  \label{ct_tw_eps}
\end{figure}

The impact of the privacy loss $\ell$ is evaluated in Fig. \ref{ct_tw_ell} when the privacy radius is set to $0.5$ and $K = 50$. Both continuity in Fig. \ref{ct_ell} and trustworthiness in Fig. \ref{tw_ell} increase with the privacy loss, as expected. This behavior results from the fact that the effective privacy budget increases with the privacy loss for a fixed value of the privacy radius $r$. 
\begin{figure}[!t]
  \centering
  \subfloat[CT]{\label{ct_ell}
    \includegraphics[width=0.65\linewidth]{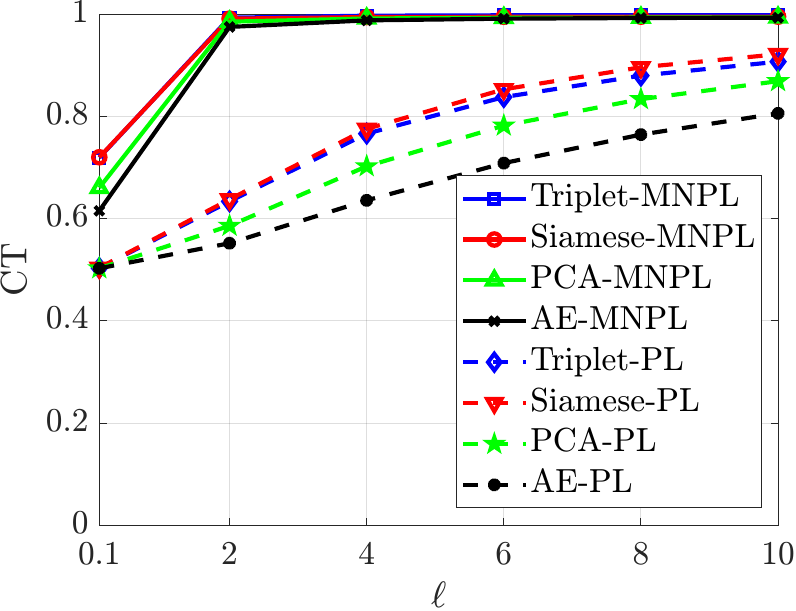}  
  } \hfill
  \subfloat[TW]{\label{tw_ell}
    \includegraphics[width=0.65\linewidth]{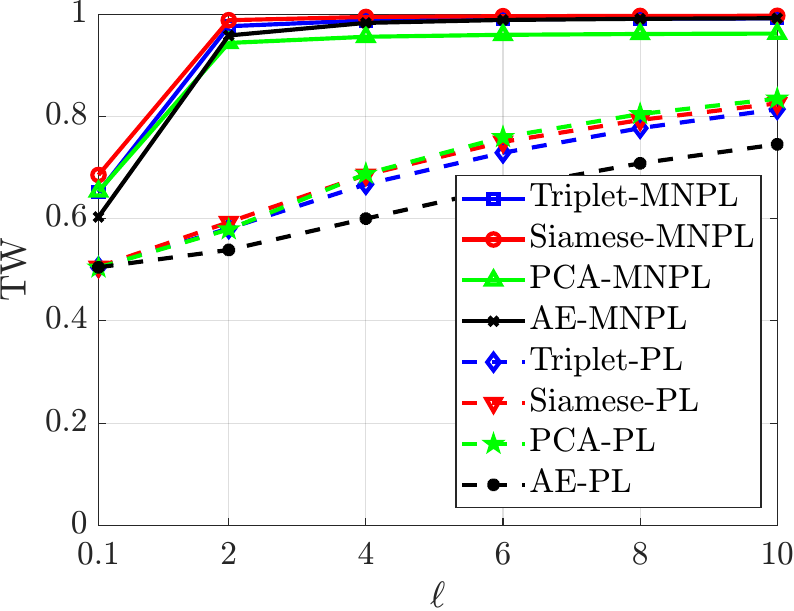}
    }
  \caption{TW and CT vs the privacy loss $\ell$. The privacy radius $r = 0.5$ and $K = 50$.}
  \label{ct_tw_ell}
\end{figure}

In contrast to $\ell$, the radius $r$ and the charting performance evolve inversely as shown in Fig. \ref{ct_tw_r}. The reason is that for a fixed value of $\ell$, the privacy budget $\varepsilon$ decreases when $r$ increases. As the privacy budget decreases, the perturbation magnitude grows, leading to a degradation of neighborhood preservation as reflected by the decrease in trustworthiness and continuity.
     
\begin{figure}[!t]
  \centering
  \subfloat[CT]{
    \includegraphics[width=0.75\linewidth]{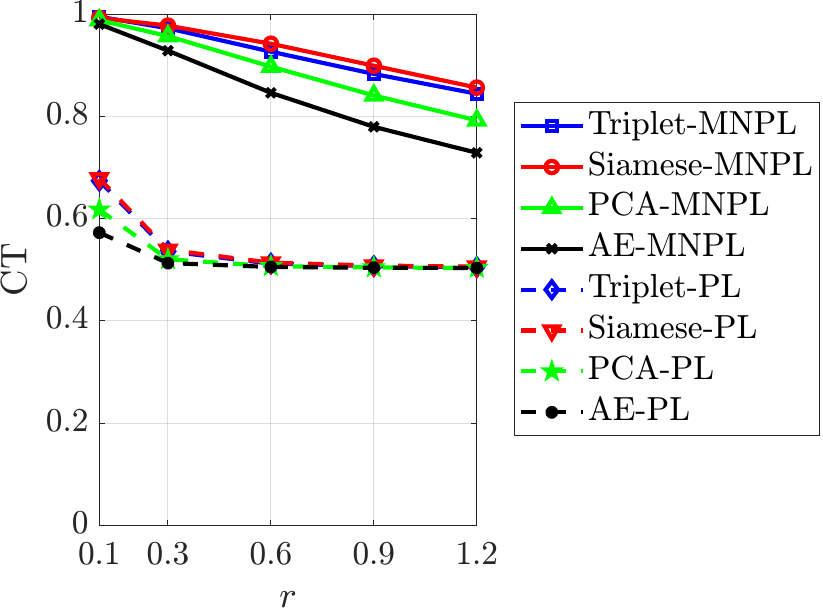}  
  } \hfill
  \subfloat[TW]{
    \includegraphics[width=0.75\linewidth]{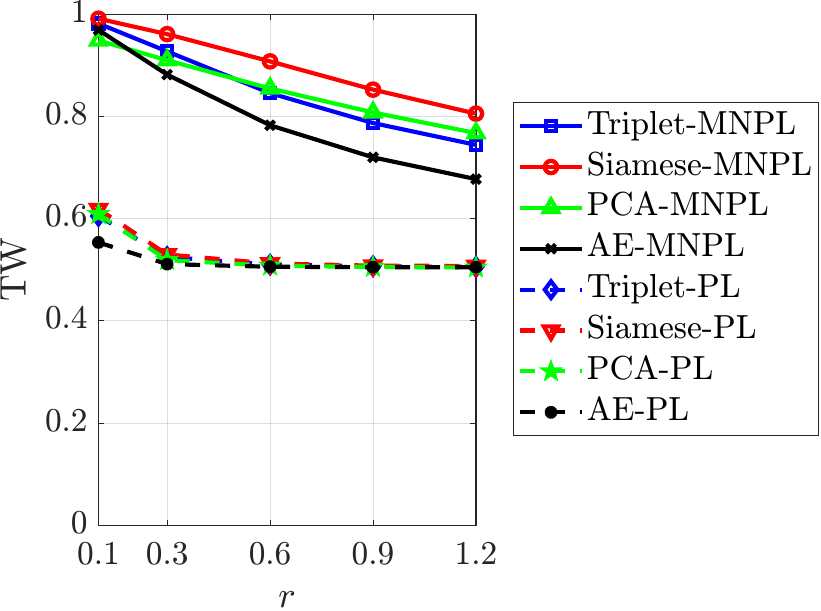}
    }
  \caption{TW and CT vs the privacy radius $r$ where the privacy loss $\ell = 0.5$.}
  \label{ct_tw_r}
\end{figure}

In Fig. \ref{rqe_eps}, Tables \ref{tab_ql_MNPL} and \ref{tab_ql_PL}, a comparative evaluation is presented between the perturbed and original charts. The range query error (RQE) in \eqref{eq_rqe} and the quality loss (QL) in \eqref{eq_ql} are considered. The range query error is illustrated in Fig. \ref{rqe_eps} for the Siamese CC scheme when varying the query radius from $0.1$ to $1.2$ and for different values of $\varepsilon \in \{0.1, 0.2, 0.3 \}$. We present only the RQE for the Siamese CC scheme, as we notice that all the considered CC methods give very similar RQE values. As the query range $r$ increases, RQE decreases for both Fig. \ref{rqe_mnlp} for the MNPL mechanism and Fig. \ref{rqe_pl} for the PL mechanism. This is the expected behavior based on the definition of RQE (Section \ref{perf_metrics}) which measures the proportion of locations that are perturbed out of the range $r$. Regarding the privacy budget $\varepsilon$, RQE decreases when $\varepsilon$ increases for a given value of $r$. Comparing both Fig. \ref{rqe_mnlp} and  Fig. \ref{rqe_pl}, MNPL achieves lower RQE values and therefore better performance than PL.
\begin{figure}[!t]
  \centering
  \subfloat[MNPL]{\label{rqe_mnlp}
    \includegraphics[width=0.65\linewidth]{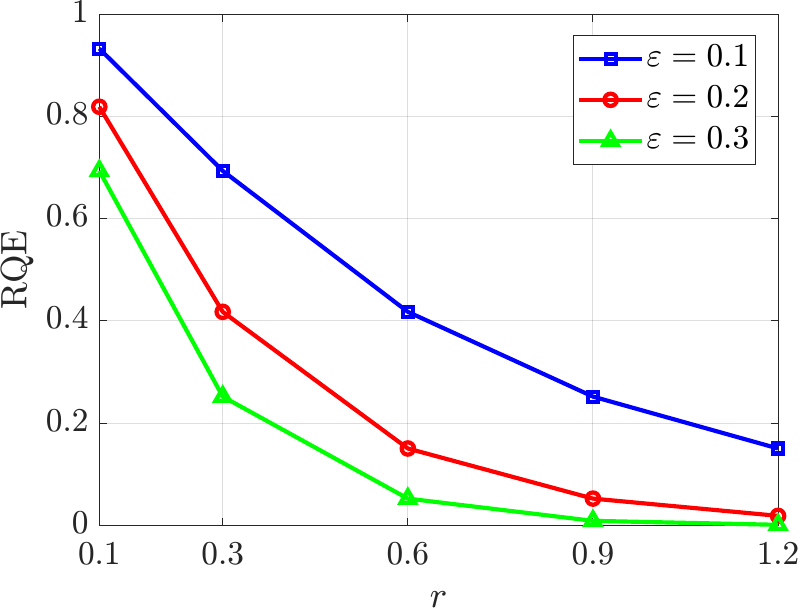}  
  } \hfill
  \subfloat[PL]{\label{rqe_pl}
    \includegraphics[width=0.65\linewidth]{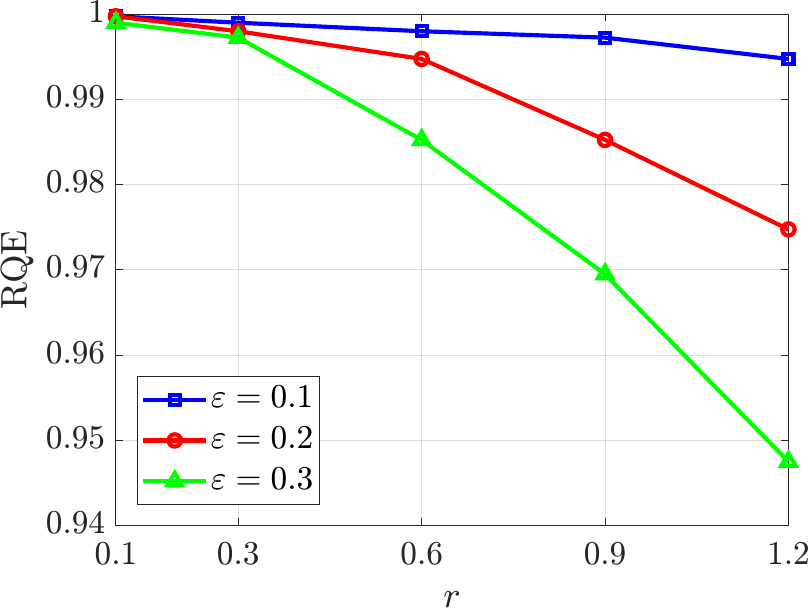}
    }
  \caption{Range query error vs the radius $r$ for different values of $\varepsilon$ for both PL and MNPL mechanisms in the case of the Siamese channel charting scheme.}
  \label{rqe_eps}
\end{figure}

Tables \ref{tab_ql_MNPL} and \ref{tab_ql_PL} include the assessment of the quality loss for MNPL and PL, respectively. QL is calculated for all the considered channel charting schemes and for different values of $\varepsilon \in \{0.1, 0.5, 1\}$. In terms of QL, all channel charting methods give very similar values for a given value of $\varepsilon$. As the privacy budget increases, the QL decreases across all channel charting schemes. The MNPL mechanism achieves the lowest QL values in all cases.
\begin{table}[!t]
\centering
\caption{Quality loss vs $\varepsilon$: case of MNPL}
\renewcommand{\arraystretch}{1.5} 
\begin{tabular}{|c|c|c|c|c|}
\hline
$\varepsilon$ &  Triplet & Siamese & PCA & AE  \\
\hline
$0.1$ & $0.66$ & $0.66$ & $0.66$  & $0.66$   \\
\hline
$0.5$ & $0.13$ & $0.13$ & $0.13$ & $0.13$  \\
\hline
$1$ & $0.07$ & $0.07$ & $0.07$ & $0.07$  \\
\hline
\end{tabular}
\label{tab_ql_MNPL}
\end{table}

\begin{table}[!t]
\centering
\caption{Quality loss vs $\varepsilon$: case of PL}
\renewcommand{\arraystretch}{1.5} 
\begin{tabular}{|c|c|c|c|c|}
\hline
$\varepsilon$ &  Triplet & Siamese & PCA & AE  \\
\hline
$0.1$ & $19.86$ & $19.86$ & $19.86$  & $19.86$   \\
\hline
$0.5$ & $3.97$ & $3.97$ & $3.97$ & $3.97$  \\
\hline
$1$ & $1.98$ & $1.98$ & $1.98$ & $1.98$  \\
\hline
\end{tabular}
\label{tab_ql_PL}
\end{table}

\section{Conclusion} \label{conclude}
{
In this paper, we have proposed a framework for location privacy in CC referred to as CLI, an extension of GI tailored to CC representations. The planar Laplace (PL) mechanism was investigated and a geometry-aware Mahalanobis norm planar Laplace (MNPL) mechanism was developed within the CLI framework. By incorporating the intrinsic geometry of the chart through a Mahalanobis distance, the MNPL mechanism perturbs chart points in a structured manner aligned with the local neighborhood structure of the chart. In addition, differential privacy using the Gaussian mechanism was considered.
Extensive evaluations across multiple CC methods demonstrated the effectiveness of the proposed framework, highlighting the trade-offs between privacy guarantees and performance. 

While the proposed CLI framework provides a first step toward geometry-aware privacy in CC, future work could investigate adaptive privacy mechanisms in which the noise distribution dynamically adjusts to variations in the spatial density of users and environmental characteristics. 
The proposed framework can also be extended to systems in which CC operates in dynamic and time-varying environments, where user trajectories and temporal correlations may reveal additional information. 

}

\section*{Acknowledgments}
{
A. K. A. Passah has been supported by SRV ENSEA, the EC through the Horizon Europe/JU SNS project ROBUST-6G (Grant Agreement no. 101139068), CNPq and the COST Action CA22168 6G-PHYSEC. A. Chorti has been partially supported by the EC through the Horizon Europe/JU SNS project ROBUST-6G (Grant Agreement no. 101139068), the IPAL Project CONNECTING, the TalCyb Chair in Cybersecurity and by the French government under the France 2030 ANR program “PEPR Networks of the Future” (ref. ANR-22-PEFT-0008 and ANR-22-PEFT-0009). R. C. de Lamare has been supported by PUC-Rio, CNPq, FAPERJ and FAPESP.}

\bibliographystyle{IEEEtran}
\bibliography{references}



\end{document}